\newcommand{\D}{\dagger}
\newcommand{\R}{\mathbb{R}}
\newcommand{\C}{\mathbb{C}}
\newcommand{\N}{\mathbb{N}}
\newcommand{\id}{\mathbbm{1}}
\newcommand{\diag}{{\rm diag} }
\newtheorem{theorem}{Theorem}[section]
\newtheorem{lemma}[theorem]{Lemma}
\newenvironment{proof}[1][Proof:]{\begin{trivlist}
\item[\hskip \labelsep {\bfseries #1}]}{\end{trivlist}}
\newenvironment{remark}[1][Remark:]{\begin{trivlist}
\item[\hskip \labelsep {\bfseries #1}]}{\end{trivlist}}
\newcommand{\qed}{\nobreak \ifvmode \relax \else
      \ifdim\lastskip<1.5em \hskip-\lastskip
      \hskip1.5em plus0em minus0.5em \fi \nobreak
      \vrule height0.75em width0.5em depth0.25em\fi}
\title{A relativistically interacting exactly solvable multi-time model for two mass-less Dirac particles in $1+1$ dimensions}
\author{
Matthias Lienert\thanks{lienert@math.lmu.de, Mathematisches Institut, Ludwig-Maximilians-Universit\"at, 
	Theresienstr. 39, 80333 M\"unchen, Germany}
}
\date{\today}
\begin{document}

\maketitle

\begin{abstract}
	\noindent The question how to Lorentz transform an $N$-particle wave function naturally leads to the concept of a so-called multi-time wave function, i.e. a map from $({\rm space\text{-}time})^N$ to a spin space. This concept was originally proposed by Dirac as the basis of relativistic quantum mechanics. In such a view, interaction potentials are mathematically inconsistent. This fact motivates the search for new mechanisms for relativistic interactions. In this paper, we explore the idea that relativistic interaction can be described by boundary conditions on the set of coincidence points of two particles in space-time. This extends ideas from zero-range physics to a relativistic setting. We illustrate the idea at the simplest model which still possesses essential physical properties like Lorentz invariance and a positive definite density: two-time equations for mass-less Dirac particles in $1+1$ dimensions. In order to deal with a spatio-temporally non-trivial domain, a necessity in the multi-time 
picture, we develop a new method to prove existence and uniqueness of classical solutions: a generalized version of the method of characteristics. Both mathematical and physical considerations are combined to precisely formulate and answer the questions of probability conservation, Lorentz invariance, interaction and antisymmetry.\\
    
    \noindent \textbf{Keywords:} multi-time wave functions, relativistic interactions, Lorentz invariance, zero-range physics, Dirac equation
\end{abstract}

\section{Introduction}

The idea of a multi-time wave function was first suggested by Dirac in 1932 in his article ``Relativistic Quantum Mechanics'' \cite{dirac_32}. For him, working in the Schrödinger picture, it seemed to be the essential step from non-relativistic to relativistic quantum mechanics to replace the usual single-time wave function (here for $N$ particles and $d$ spatial dimensions)
\begin{equation}
 \varphi : \underbrace{\R^d \times \cdots \times \R^d}_{N~{\rm times}} \times \R ~\longrightarrow~\mathcal{S},~~~(\mathbf{x}_1,..., \mathbf{x}_N, t) ~\longmapsto~\varphi (\mathbf{x}_1,..., \mathbf{x}_N, t)
 \label{eq:singletimewavefn}
\end{equation}
with a relativistic wave function that involves a time coordinate for each particle:
\begin{equation}
 \psi : \underbrace{\R^{1+d} \times \cdots \times \R^{1+d}}_{N~{\rm times}} ~\longrightarrow~\mathcal{S},~~~(x_1,..., x_N) ~\longmapsto~\psi (x_1,..., x_N).
 \label{eq:multitimewavefn}
\end{equation}
Here, $\mathbf{x}_k$ denotes the spatial coordinates of the $k$-th particle, $x_k = (t_k,\mathbf{x}_k)$ its spatio-temporal coordinates and $\mathcal{S}$ a suitable spin space. Because of the presence of many time coordinates, these relativistic Schrödinger picture wave functions have been termed \textit{multi-time wave functions}. Their connection with single-time wave functions is straightforward:
\begin{equation}
 \varphi(\mathbf{x}_1,...,\mathbf{x}_N,t) ~=~ \psi(\mathbf{x}_1,t,...,\mathbf{x}_N,t).
 \label{eq:connection}
\end{equation}
 One can also think of a multi-time wave function as arising as follows: Consider a configuration of two particles $(\mathbf{x}_1, \mathbf{x}_2)$. In order to fully explain what is meant by such a configuration, one has to specify the time $t$ at which it is considered (in a certain Lorentz frame $F$): $(\mathbf{x}_1, \mathbf{x}_2, t)$. In order to transform this configuration to another frame $F'$, we have to consider the physically synonymous collection of $(1+d)$-vectors $q = ((t,\mathbf{x}_1),(t,\mathbf{x}_2))$. Only then one can apply the Lorentz transformation $\Lambda$ describing the transition between $F$ and $F'$: $q' = (\Lambda(t,\mathbf{x}_1),\Lambda(t,\mathbf{x}_2))$. However, this will in general yield $q' = (t_1',\mathbf{x}_1',t_2',\mathbf{x}_2')$ with $t_1' \neq t_2'$, i.e. not a simultaneous configuration. Thus, applying this consideration to the argument of a single-time wave function, we arrive at the \textit{necessity} to consider a multi-time wave function.\\
 Note that this exactly yields a multi-time wave function on the domain of \textit{space-like configurations.} To allow for time-like configurations in the argument of $\psi$ does not make sense, as they could actually correspond to points on the world line of a single particle. Relatedly, due to the expected statistical role of $\psi$, an adequate notion of normalization can only hold on space-like surfaces.\\ 
 As evolution equations, Dirac proposed a system of $N$ wave equations:
\begin{align}
 i \frac{\partial}{\partial {t_1}} \psi ~&=~ H_1 \psi\nonumber\\
 &~ \,\vdots\nonumber\\
  i \frac{\partial}{\partial {t_N}} \psi ~&=~ H_N \psi
  \label{eq:multitime}
\end{align}
where $H_k,~k=1,...,N$ are differential operators on an appropriate function space. The system \eqref{eq:multitime} is supposed to transform covariantly under the Lorentz group. As initial data one can e.g. choose to prescribe the wave function at a common time, say $t = 0 = t_1 = ... = t_N$ or, alternatively, a configuration on a general space-like hypersurface.\\
Mathematically, \eqref{eq:multitime} is particularly interesting because it is an overdetermined system of partial differential equations (PDEs). A common solution only exists if certain compatibility conditions are satisfied \cite{bloch,tomonaga,nogo_potentials}:
\begin{equation}
 \left[H_j - i \frac{\partial}{\partial t_j}, H_k - i \frac{\partial}{\partial t_k}\right] ~=~ 0~~\forall j,k.
\label{eq:kb}
\end{equation}
These conditions are in fact quite restrictive: For example, it has recently been shown in \cite{nogo_potentials} that they exclude potentials in the following sense: Let $H_j = H_j^{\rm free} + V_j$ where $H_j^{\rm free}$ is the free Dirac Hamiltonian (or Laplacian) acting on the coordinates and spin index of the $j$-th particle and $V_j$ is a matrix-valued function that may depend on all of the particle coordinates. Then the only $V_j$'s fulfilling eq. \eqref{eq:kb} are gauge equivalent to purely external potentials, i.e. choices of $V_j$ which only depend on $x_j$, not $x_k$  with $k \neq j$.\\
This result motivates the search for alternative mechanisms of interaction instead of potentials. In \cite{qftmultitime}, the multi-time approach is extended to quantum field theory, as originally proposed by Dirac, Fock and Podolsky \cite{dfp}, Tomonaga \cite{tomonaga} and Schwinger \cite{schwinger}. However, these approaches encounter difficulties with UV-divergencies which result, roughly speaking, from the back-reaction the field generated by a particle onto the particle itself \cite{feynman_nobel_lecture}. This renders the problem mathematically ill-defined and requires to introduce regularizing parameters so that problems with Lorentz invariance may arise.\\
Here we explain a different approach, inspired by the field of zero-range physics (see \cite{albeverio} for an overview). The main idea is that boundary conditions for the wave function may yield physically interesting effects and even interaction while the formal differential operator in the wave equation is the free one. This clearly avoids the use of potentials. However, the direct of the methods used in zero-range physics, developed for the single-time formalism, is not possible for multi-time equations.\\
To illustrate this claim, recall the standard functional-analytic treatment of single-time wave equations (see e.g. \cite{reed_simon_1,reed_simon_2} and \cite[chap. 14]{duerr})
\begin{equation}
 i \frac{\partial}{\partial t} \varphi ~=~ H \varphi
 \label{eq:singletime}
\end{equation}
where $H$ is a self-adjoint operator on a Hilbert space $\mathcal{H}$, most often $\mathcal{H} = L^2(\mathcal{Q}) \otimes \mathbb{C}^k$, where $\mathcal{Q} \subset \R^{Nd}$ is the physically accessible part of configuration space. Usually, $H$ is an unbounded operator with domain $\mathcal{D}(H) \subsetneq \mathcal{H}$. The specification of $\mathcal{D}(H)$ is important for physics, as it includes potential boundary conditions which influence time evolution and spectrum. $H$ is the generator of a strongly continuous unitary one-parameter group $U(t)=\exp(-i H t)$. If $\varphi_0 \in \mathcal{D}(H)$, then $U(t) \varphi_0 \in \mathcal{D}(H)$ and $U(t) \varphi_0$ satisfies eq. \eqref{eq:singletime}. The unitarity of $U$ ensures conservation of the norm of the wave function which is essential for the statistical meaning of the wave function.\\
It is crucial to note that within this approach, the allowed boundary conditions are time-less, as the Hilbert space $L^2(\mathcal{Q}) \otimes \mathbb{C}^k$ does not include time. Spatial boundary conditions prescribed in this way are automatically extended for all times $t$. An example for two particles is:
\begin{equation}
 \varphi(\mathbf{x}_1, \mathbf{x}_2,t) ~=~ 0~~~{\rm for} ~ \mathbf{x}_1 = \mathbf{x}_2,~\forall t.
 \label{eq:examplesingletime}
\end{equation}
For multi-time wave functions, the method can be generalized straightforwardly by using a theorem in \cite[thm. VIII.12]{reed_simon_1} on strongly continuous unitary $N$-parameter groups $U(t_1,...,t_N)$ on the same Hilbert space $\mathcal{H}$ as above. Such a group can be constructed if and only if the generators $H_j$ of the one-parameter subgroups $U(0,...,0,t_j,0,...,0)$ are self-ajoint, commute pairwise and have a common domain $\mathcal{D}(H_j) \equiv \mathcal{D}$, independent of $j$. This implies: if $\psi_0 \in \mathcal{D}$, then $U(t_1,...,t_N) \psi_0 \in \mathcal{D}$ and $U(t_1,...,t_N)\psi_0$ obeys the multi-time equations \eqref{eq:multitime}.\\
However, one crucial aspect changes: boundary conditions are still supposed to be expressed via the domain $\mathcal{D}$ which makes no reference to time. Consequently, the boundary conditions are automatically extended in \textit{all} coordinate times, e.g.:
\begin{equation}
 \psi(t_1,\mathbf{x}_1,t_2,\mathbf{x}_2) ~=~ 0~~~{\rm for} ~ \mathbf{x}_1 = \mathbf{x}_2,~\forall \, t_1,t_2.
 \label{eq:examplemultitime}
\end{equation}
Using the connection between single-time and multi-time equations (eq. \eqref{eq:connection}), one can see that condition \eqref{eq:examplemultitime} in fact differs from the corresponding one in the single-time formalism \eqref{eq:examplesingletime}, although one might have $\mathcal{D}(H) = \mathcal{D}$. Namely, eq. \eqref{eq:examplesingletime} translated into the multi-time formalism via \eqref{eq:connection} reads: 
\begin{equation}
 \psi(t_1,\mathbf{x}_1,t_2,\mathbf{x}_2) = 0~~~{\rm for} ~ \mathbf{x}_1 = \mathbf{x}_2,~ t_1 = t_2
 \label{eq:comparison}
\end{equation}
with condition ``${\rm for}~t_1 = t_2$'' instead of ``$\forall \, t_1, t_2$''. However, boundary conditions like \eqref{eq:examplemultitime} for spatio-temporal configurations which may be time-like do not have a clear meaning. It thus seems that the functional-analytic approach is not adequate for multi-time equations on domains with boundaries, since it automatically implements too many\footnote{It may well be that the only common domain $\mathcal{D}$ of self-adjointness of the $H_j$'s is the one corresponding to the free operators, i.e. one where no boundary condition such as \eqref{eq:comparison} is prescribed.} and physically unreasonable boundary conditions.
Therefore, a different method is required to implement the idea that boundary conditions could lead to relativistically invariant interaction for multi-time wave functions. In order for the boundary conditions to be Lorentz invariant, time should also be admitted in their formulation. We suggest to take a step back and view the multi-time equations \eqref{eq:multitime} as a general overdetermined system of PDEs on a subset of configuration space-time $\R^{N(1+d)}$, treating space and time on equal footing.\\
Of course, such a change in methods raises important questions, such as:
\begin{enumerate}
 \item How does one prove existence and uniqueness of solutions?
 \item How is probability conservation guaranteed and which notion thereof is adequate in a relativistic regime?
\end{enumerate}
(In the functional-analytic treatment, the two points are conveniently answered by the notion of self-adjointness.)\\
In this paper, we provide a model for which both questions can be answered definitely and precisely, bearing in mind also the physical aspects of interaction and Lorentz invariance. For this purpose, we consider a two-time system of mass-less Dirac equations in one spatial dimension ($d = 1$) on the domain of space-like configurations.\\
The choice of the model is explained as follows: The dimensionality both allows for an explicit solution in the mass-less case as well as leads to the situation that a certain natural Lorentz-invariant boundary in configuration space-time, the set of coincidence points, has the right dimensionality\footnote{In a functional-analytic setting, the dimensionality of the boundary to allow for zero-range interactions is known to depend sensitively on the order of the differential operator and the dimension of configuration space \cite{svendsen}.} to have impact on the time evolution. Moreover, the Dirac equation is Lorentz invariant, reflects the expected dispersion relation and possesses a conserved tensor current with a positive component that can play the role of a probability density. The choice of domain is explained by the considerations about the necessity of multi-time wave functions following eq. \eqref{eq:connection}. Interestingly, this immediately raises the question of boundary conditions since the 
domain of space-like configurations has a non-empty boundary: the light-like configurations. This provides a natural reason to study the idea of relativistic interaction by boundary conditions.

The paper is structured as follows: We begin with introducing the model, as defined by its multi-time equations, domain and initial conditions as well as boundary conditions at the space-time points of coincidence. Next, the general solution is found (lemma \ref{thm:generalsolution}) and existence and uniqueness of $C^k$-solutions are studied by a generalized method of characteristics (theorem \ref{thm:ibvp}). We continue with a proposal how probability conservation can be understood for multi-time wave functions (lemma \ref{thm:currentform}) and determine a general class of boundary conditions that guarantees it (theorem \ref{thm:currentconsbdyconds}). We proceed with proving the Lorentz invariance of the model, and particularly of the boundary conditions (lemma \ref{thm:libdyconds}). Moreover, a criterion for what constitutes interaction is suggested and applied to the model, showing that it is indeed interacting in this sense (theorem \ref{thm:interaction}). The time evolution and effect of the 
interaction are explicitly illustrated at the example of initially localized wave packets for each of the two particles. Finally, the implications of anti-symmetry for the boundary conditions in the case of indistinguishable particles are analyzed (lemma \ref{thm:antisymmetry}). We conclude with an outlook on possible generalizations of the model.

\section{The model}

Our model is based on a two-time wave function for two Dirac (spin-$\tfrac{1}{2}$) particles in $(1+1)$-dimensional space-time:
\begin{equation}
 \psi :~ \Omega \subset \R^2 \times \R^2 ~~\longrightarrow ~~ \C^2 \otimes \C^2, ~~~~~(t_1, z_1, t_2, z_2) ~~\longmapsto ~~ \psi(t_1, z_1, t_2, z_2).
 \label{eq:twotimewavefn}
\end{equation}
According to the arguments in the introduction, the physically natural choice of the domain $\Omega$ is the set $\mathscr{S}$ of space-like configurations, given by:
\begin{equation}
 \mathscr{S} ~:=~ \{ (t_1,z_1,t_2,z_2) \in \R^2 \times \R^2 : (t_1 - t_2)^2 - (z_1 - z_2)^2 < 0 \}.
 \label{eq:spacelikeconfigs}
\end{equation}
Our sign convention for the flat space metric is $g = \diag(1,-1)$.\\
Initial data should be prescribed on a surface $\mathcal{I}$ of the form $\mathcal{I} = (\Sigma_0 \times \Sigma_0) \cap \Omega$ where $\Sigma_0$ is a space-like hypersurface. We choose:
\begin{equation}
 \mathcal{I} ~:=~ \{ (t_1,z_1,t_2,z_2) \in \mathscr{S} : t_1 = t_2 = 0 \},
 \label{eq:initialdataset}
\end{equation}
i.e. a $\Sigma_0$ corresponding to $t = 0$.\\
In order to obtain a fully Lorentz invariant model, boundary conditions have to be prescribed on a Lorentz invariant subset of $\partial \Omega$. The first natural choice is the whole of $\partial \mathscr{S}$, i.e. the set
 \begin{equation}
  \mathscr{L} ~:=~ \{ (t_1,z_1,t_2,z_2) \in \mathscr{S} : (t_1 - t_2)^2 - (z_1 - z_2)^2 = 0 \}
  \label{eq:lightlighconfigs}
 \end{equation}
 of light-like configurations. However, this set has dimension three, as compared to dimension two of $\mathcal{I}$, so one expects that it leads to an overdetermined initial boundary value problem (IBVP). The second natural choice -- and the one we shall make -- is the set $\mathscr{C}$ of \textit{coincidence points} in space-time, given by:
\begin{equation}
 \mathscr{C} := \{ (t_1,z_1,t_2,z_2) \in \R^2 \times \R^2 : t_1 = t_2, z_1 = z_2 \}.
 \label{eq:coincidencepts}
\end{equation}
As two-time wave equations we use the free ($1+1$)-dimensional Dirac equations acting on the spin indices of the first and second particle, respectively:
\begin{align}
 i \frac{\partial}{\partial t_1} \psi(t_1, z_1, t_2, z_2) ~&= \, - i \, \sigma_3 \otimes \id_2 \frac{\partial}{\partial z_1} \, \psi(t_1, z_1, t_2, z_2), \nonumber\\
 i \frac{\partial}{\partial t_2} \psi(t_1, z_1, t_2, z_2) ~&= \, - i \, \id_2 \otimes \sigma_3 \frac{\partial}{\partial z_2} \,\psi(t_1, z_1, t_2, z_2).
 \label{eq:twotime}
\end{align}
In the case with mass, additional terms $m_1 \, \sigma_1 \otimes \id_2$ and $m_2 \, \id_2 \otimes \sigma_1$ appear in front of $\psi$ on the right hand side. Here,
\begin{equation}
 \sigma_1 = \left(\begin{array}{cc}
0 & 1 \\ 
1 & 0
\end{array} \right), ~~~
\sigma_2 = \left(\begin{array}{cc}
0 & -i \\ 
i & 0
\end{array} \right), ~~~
\sigma_3 = \left(\begin{array}{cc}
1 & 0 \\ 
0 & -1
\end{array} \right)
 \label{eq:pauli}
\end{equation}
are the Pauli matrices. Note that the compatibility conditions \eqref{eq:kb} are satisfied, as the matrices appearing in the first and second equation of \eqref{eq:twotime} are constant and commute.\\
To summarize, the model is given by:
\begin{equation}
 \left\{\begin{array}{l}
{\rm Eqs.}~\eqref{eq:twotime}~{\rm on}~\mathscr{S},\\
 \psi_i = g_i~{\rm on}~\mathcal{I},~i = 1,2,3,4,\\
 {\rm boundary~conditions~on}~\mathscr{C}. \end{array} \right.
 \label{eq:model}
\end{equation}
Here, $\psi_i,~i = 1,2,3,4$ denote the components of $\psi$ with respect to the (ordered) basis
\begin{equation}
 \mathcal{B} = ( e_1 \otimes e_1, e_1 \otimes e_2,e_2 \otimes e_1, e_2 \otimes e_2)
 \label{eq:psicomponents}
\end{equation}
where $e_i$ are the canonical basis vectors of $\C^2$. $g_i,~i = 1,2,3,4$ are arbitrary complex-valued $C^k$-functions on $\overline{\mathcal{I}}$. The form of admissible boundary conditions will be explored in the next section.\\
For future convenience note the following explicit representation for arbitrary complex-valued $2 \times 2$ matrices $A = (a_{ij}),~B = (b_{ij})$ with respect to $\mathcal{B}$:
\begin{equation}
 A \otimes \id_2 = \left(\begin{array}{cccc}
a_{11} & 0 & a_{12} & 0 \\ 
0 & a_{11} & 0 & a_{12} \\
a_{21} & 0 & a_{22} & 0 \\
0 & a_{21} & 0 & a_{22}
\end{array} \right), ~~~
\id_2 \otimes B = \left(\begin{array}{cccc}
b_{11} & b_{12} & 0 & 0 \\ 
b_{21} & b_{22} & 0 & 0 \\
0 & 0 & b_{11} & b_{12} \\
0 & 0 & b_{21} & b_{22} \\
\end{array} \right).
 \label{eq:tensorrep}
\end{equation}

\section{Existence and uniqueness}
In this section, it is shown which type of boundary conditions ensures existence and uniqueness of a $C^k$-solution (for any $k \in \N$) of the two-time equations. This is achieved using a generalized version of the method of characteristics.
\begin{lemma}
 On any open and connected domain $D \subset \R^2\times \R^2$, the general solution of the two-time system \eqref{eq:twotime}  is given by:
  \begin{equation}
    \left(\begin{array}{c} \psi_1\\ \psi_2\\ \psi_3\\ \psi_4 \end{array} \right)(t_1,z_1,t_2,z_2) ~=~ \left(\begin{array}{c} f_1(z_1-t_1,z_2-t_2)\\ f_2(z_1-t_1,z_2+t_2)\\ f_3(z_1+t_1,z_2-t_2)\\ f_4(z_1+t_1,z_2+t_2) \end{array} \right)
    \label{eq:generalsolution}
  \end{equation}
 where $f_j: \R^2 \rightarrow \C,~j=1,2,3,4$ are $C^1$-functions.
 \label{thm:generalsolution}
\end{lemma}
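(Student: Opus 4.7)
The plan is to diagonalize the system \eqref{eq:twotime} and then integrate along characteristics. Inserting $\sigma_3 = \diag(1,-1)$ into the representation \eqref{eq:tensorrep} shows $\sigma_3 \otimes \id_2 = \diag(1,1,-1,-1)$ and $\id_2 \otimes \sigma_3 = \diag(1,-1,1,-1)$. Thus \eqref{eq:twotime} is already diagonal, and splits componentwise into four independent pairs of scalar transport equations: for example $\psi_1$ obeys $(\partial_{t_1} + \partial_{z_1})\psi_1 = 0$ and $(\partial_{t_2} + \partial_{z_2})\psi_1 = 0$, and similarly for $\psi_2, \psi_3, \psi_4$ with signs dictated by the respective diagonal entries. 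These sign patterns match exactly the combinations $z_k \pm t_k$ appearing as arguments of the $f_j$ on the right-hand side of \eqref{eq:generalsolution}.

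I would then apply the method of characteristics to each component. Taking $\psi_1$ as a representative, I pass to the global linear coordinates
\[
(s_1, u_1, s_2, u_2) \,:=\, (t_1,\, z_1 - t_1,\, t_2,\, z_2 - t_2),
\]
which is a diffeomorphism of $\R^4$. The chain rule turns the two transport equations precisely into the statement that the pulled-back $\psi_1$ is independent of $s_1$ and of $s_2$, so locally $\psi_1 = f_1(u_1, u_2) = f_1(z_1 - t_1, z_2 - t_2)$. Analogous sign choices for the remaining three components yield \eqref{eq:generalsolution}. The converse inclusion is immediate: any $\psi$ of the form \eqref{eq:generalsolution} solves \eqref{eq:twotime} by direct substitution.

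The only step requiring real care is that $D$ is only open and connected, not a product or even convex. In that generality one must verify that the local expression $\psi_j = f_j(\Phi_j(t_1, z_1, t_2, z_2))$, with $\Phi_j$ the appropriate characteristic projection to $\R^2$, glues into a single well-defined $C^1$-function on $\Phi_j(D)$. This is handled by a standard monodromy argument: any two points of $D$ with the same $\Phi_j$-image can be joined by a path in $D$, whose image under $\Phi_j$ is a loop in $\R^2$, and the two independent transport equations force $\psi_j$ to agree at the endpoints, so $f_j$ is single-valued on $\Phi_j(D)$. To meet the stated formulation in which $f_j$ is defined on all of $\R^2$, one finally extends $f_j$ from the open set $\Phi_j(D)$ to $\R^2$ via a standard $C^1$-extension (e.g.\ Whitney's theorem), which does not alter $\psi$ on $D$. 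This global consistency on a non-product connected domain is the only, and rather mild, obstacle.
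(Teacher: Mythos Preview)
Your diagonalization and the local characteristic argument are correct and coincide with the paper's proof, which is equally brief: it writes out the diagonal system, observes that each $\psi_j$ satisfies two first-order transport equations, and concludes $\psi_1 = f_1(z_1-t_1,z_2-t_2)$, etc., without further comment on the global issue.

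You are right to flag the question of single-valuedness of $f_j$ on a merely connected $D$; the paper glosses over it. However, your monodromy argument does not work. Connectedness of $D$ does \emph{not} imply that the characteristic leaves $\Phi_j^{-1}(c)\cap D$ are connected, and the transport equations only force $\psi_j$ to be constant on each connected component of such a leaf. Concretely, in the characteristic coordinates $(s_1,u_1,s_2,u_2)$ for $\psi_1$, take $D=\R^4\setminus\{s_1=0,\ |u_1|\le 1\}$. This set is open and connected (one passes from $s_1<0$ to $s_1>0$ through the region $|u_1|>1$), yet the leaf $\{u_1=0,\,u_2=0\}\cap D$ splits into the two components $s_1>0$ and $s_1<0$. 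The function $\psi_1=\mathrm{sgn}(s_1)\,g(u_1)$, with $g$ a smooth bump supported in $(-1,1)$, is $C^\infty$ on $D$ and satisfies both transport equations, but is not of the form $f_1(u_1,u_2)$. Your path argument fails here because a path in $D$ whose $\Phi_j$-image closes up tells you nothing: the transport equations constrain only the $s$-derivatives, not how $\psi_j$ varies in the $u$-directions along the path.

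Thus the lemma, read literally for arbitrary open connected $D$, is too strong. In the paper this causes no harm: the representation \eqref{eq:generalsolution} is effectively used only locally, and the subsequent theorem on $\Omega_1$ explicitly constructs curves $\gamma_i\subset S_i\cap\overline{\Omega}_1$ joining each point to its initial or boundary value, which is precisely the leaf-connectivity verification your argument is missing. For your write-up, either restrict the statement to convex $D$ (or $D=\R^4$), or phrase it as a local normal form and defer the global well-definedness of $f_j$ to that later step.
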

\begin{proof}
  Using eq. \eqref{eq:tensorrep}, we explicitly write out eq. \eqref{eq:twotime}:
  \begin{align}
  i \frac{\partial}{\partial t_1} \left(\begin{array}{c} \psi_1\\ \psi_2\\ \psi_3\\ \psi_4 \end{array} \right) ~&=~ - i \left(\begin{array}{cccc}
  1 & ~ & ~ & ~\\
  ~ & 1 & ~ & ~\\
  ~ & ~ & -1 & ~\\
  ~ & ~ & ~ & -1
  \end{array} \right) \frac{\partial}{\partial z_1} \left(\begin{array}{c} \psi_1\\ \psi_2\\ \psi_3\\ \psi_4 \end{array} \right), \nonumber\\
  i \frac{\partial}{\partial t_2} \left(\begin{array}{c} \psi_1\\ \psi_2\\ \psi_3\\ \psi_4 \end{array} \right) ~&=~ - i \left(\begin{array}{cccc}
  1 & ~ & ~ & ~\\
  ~ & -1 & ~ & ~\\
  ~ & ~ & 1 & ~\\
  ~ & ~ & ~ & -1
  \end{array} \right) \frac{\partial}{\partial z_2} \left(\begin{array}{c} \psi_1\\ \psi_2\\ \psi_3\\ \psi_4 \end{array} \right).
  \label{eq:twotimediag}
  \end{align}
  We see that the basis in spin space has been chosen such that all occurring matrices are diagonal. The structure of the equations becomes very simple. For example, for $\psi_1$ we have:
  \begin{equation}
    \left( \frac{\partial}{\partial t_1} + \frac{\partial}{\partial z_1} \right) \psi_1 = 0,~~~\left( \frac{\partial}{\partial t_2} + \frac{\partial}{\partial z_2} \right) \psi_1 = 0~~~\Rightarrow~ \psi_1(t_1,z_1,t_2,z_2) = f_1(z_1-t_1,z_2-t_2)
    \label{eq:psi1}
  \end{equation}
  where $f_1$ is $C^1$. The claim for the other components follows analagously. \qed
\end{proof}
It is instructive to understand this result in geometrical terms. Eq. \eqref{eq:generalsolution} implies that the components of the solution are constant along certain two-dimensional surfaces in $\R^2 \times \R^2$ (for some $c_1,c_2 \in \R$):
\begin{align}
 S_1(c_1,c_2) ~&:=~ \{ (t_1,z_1,t_2,z_2) \in \R^2 \times \R^2 : z_1 - t_1 = c_1,~z_2-t_2 = c_2\}\nonumber\\
 S_2(c_1,c_2) ~&:=~ \{ (t_1,z_1,t_2,z_2) \in \R^2 \times \R^2 : z_1 - t_1 = c_1,~z_2+t_2 = c_2\}\nonumber\\
 S_3(c_1,c_2) ~&:=~ \{ (t_1,z_1,t_2,z_2) \in \R^2 \times \R^2 : z_1 + t_1 = c_1,~z_2-t_2 = c_2\}\nonumber\\
 S_4(c_1,c_2) ~&:=~ \{ (t_1,z_1,t_2,z_2) \in \R^2 \times \R^2 : z_1 + t_1 = c_1,~z_2+t_2 = c_2\}
 \label{eq:char}
\end{align}
where the index refers to the component $\psi_i$ that is constant along $S_i$. This behavior closely resembles the method of characteristics (see e.g. \cite{kreiss,courant_hilbert_2}). We therefore call the surfaces $S_i$ \textit{multi-time characteristics}. They allow for a simple and powerful method to study the IBVP. Note that for the unbounded domain $\R^2 \times \R^2$ lemma \ref{thm:generalsolution} already yields existence and uniqueness of solutions for the initial value problem \eqref{eq:model}, with $f_i(x,y)$ from eq. \eqref{eq:generalsolution} given by $g_i(x,y)$. For more complex domains such as $\Omega = \mathscr{S}$, one has to know more about the topological structure (in particular connectedness).

\begin{lemma}
 The domain $\Omega$ is the is the disjoint union of the sets $\Omega_1$ and $\Omega_2$ where
 \begin{align} 
  \Omega_1 ~&:=~ \{ (t_1,z_1,t_2,z_2) \in \R^2 \times \R^2 : (t_1 - t_2)^2 - (z_1 - z_2)^2 < 0,~ z_1 < z_2 \},\nonumber\\
 \Omega_2 ~&:=~ \{ (t_1,z_1,t_2,z_2) \in \R^2 \times \R^2 : (t_1 - t_2)^2 - (z_1 - z_2)^2 < 0,~ z_1 > z_2 \}.
 \label{eq:omega12}
 \end{align}
 Furthermore, $\Omega_1$ and $\Omega_2$ cannot be connected by a curve lying entirely in $\Omega$.
 \label{thm:omega12}
\end{lemma}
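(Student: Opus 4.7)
The plan is to exploit the fact that the space-like condition $(t_1-t_2)^2 - (z_1-z_2)^2 < 0$ forces $|z_1-z_2| > |t_1-t_2| \geq 0$, so in particular $z_1 \neq z_2$ everywhere on $\Omega$. This observation simultaneously drives both parts of the lemma.

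First I would verify the disjoint-union decomposition. By the strict inequality defining $\mathscr{S}$, every point of $\Omega$ satisfies $z_1 \neq z_2$, hence either $z_1 < z_2$ or $z_1 > z_2$, giving $\Omega \subset \Omega_1 \cup \Omega_2$. The reverse inclusion and the disjointness $\Omega_1 \cap \Omega_2 = \emptyset$ are immediate from the definitions.

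Next I would prove the non-connectedness by contradiction via the intermediate value theorem. Suppose $\gamma : [0,1] \to \Omega$ is a continuous curve with $\gamma(0) \in \Omega_1$ and $\gamma(1) \in \Omega_2$. Consider the continuous real-valued function $h(s) := z_1(\gamma(s)) - z_2(\gamma(s))$. By assumption $h(0) < 0$ and $h(1) > 0$, so there exists $s^\ast \in (0,1)$ with $h(s^\ast) = 0$, i.e. $z_1(\gamma(s^\ast)) = z_2(\gamma(s^\ast))$. At this parameter value one has
\begin{equation}
 (t_1 - t_2)^2 - (z_1 - z_2)^2 \,\big|_{\gamma(s^\ast)} ~=~ (t_1(\gamma(s^\ast)) - t_2(\gamma(s^\ast)))^2 ~\geq~ 0,
\end{equation}
contradicting $\gamma(s^\ast) \in \Omega \subset \mathscr{S}$, where the inequality is strict. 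Hence no such curve exists.

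There is no real obstacle here; the argument is essentially a one-line application of the intermediate value theorem once one notices that $z_1 = z_2$ is incompatible with being space-like. If anything, the only point worth emphasizing in the write-up is the strict inequality in the definition of $\mathscr{S}$, which is precisely what excludes the ``bridging'' configurations on the light cone where $z_1 = z_2$ and $t_1 = t_2$ would be allowed.
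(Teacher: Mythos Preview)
Your argument is correct. The disjoint-union part is handled exactly as the paper intends (the paper merely calls it ``obvious from the definition''), and your intermediate-value-theorem argument for the non-connectedness is a clean, fully explicit version of the same underlying idea.

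The paper's own proof of the second statement is even terser: it just observes that $\Omega_1$ and $\Omega_2$ are disjoint and open, and leaves the conclusion to the standard topological fact that no continuous path in $\Omega = \Omega_1 \cup \Omega_2$ can start in one open piece and end in the other. Your IVT argument is precisely the concrete realization of that principle via the continuous separating function $h(s) = z_1(\gamma(s)) - z_2(\gamma(s))$. So there is no genuine methodological difference, only a difference in level of detail: the paper appeals to the abstract connectedness argument, while you unwind it explicitly. Your version has the pedagogical advantage of being self-contained; the paper's version has the advantage of brevity.
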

\begin{proof}
 The first statement is obvious from the definition of $\Omega = \mathscr{S}$ (eq. \eqref{eq:spacelikeconfigs}). The second statement follows because $\Omega_1, \Omega_2$ are disjoint and open (as can be seen from eq. \eqref{eq:omega12}). \qed
\end{proof}
This splitting of $\Omega$ into path-wise disjoint parts implies that one should formulate the IBVP separately for $\Omega_1, \Omega_2$. In particular, this allows for more subtle boundary conditions as limits within either $\Omega_1$ or $\Omega_2$. To identify these limits would mean to reduce the number of possibilities to prescribing that $\psi$ has to be continuous across the boundary. It may, however, be physically reasonable to admit singularities (including jumps) of $\psi$ at the boundary. In fact, this situation is generic in the field of zero range physics \cite{albeverio} where these singularities for example appear for $\delta$-interactions.\\

\noindent Now we come to the main result of this section: the formulation of the initial boundary value problem and the corresponding proof of existence and uniqueness of solutions.

\begin{theorem} Let $k \in \N$. Given complex-valued $C^k$ functions $h_j^\pm$ as well as $g_i^{(j)}$ ($i = 1,2,3,4;~j = 1,2$) such that \eqref{eq:compatibvp} holds, there exists a unique solution $\psi$ which is $C^k$ on $\Omega_1$ and $\Omega_2$ for the initial boundary value problem defined by:
 \begin{enumerate}
  \item For $\Omega_1$:
  \begin{align}
  &\psi_i(0,z_1,0,z_2) ~=~ g_i^{(1)}(z_1,z_2),~~i = 1,2,3,4~{\rm for}~ z_1 < z_2~{\rm i.e.~ on}~\mathcal{I}_1 := \mathcal{I} \cap \Omega_1,\nonumber\\
  &\psi_3(t,z-0,t,z+0) ~=~ h^+_1(t,z)~{\rm for}~t \geq 0,~{\rm i.e.~on} ~\mathscr{C},\nonumber\\
  &\psi_2(t,z-0,t,z+0) ~=~ h^-_1(t,z)~{\rm for}~t < 0,~{\rm i.e.~on} ~\mathscr{C}.
  \label{eq:ibvp1}
 \end{align}
  \item For $\Omega_2$:
   \begin{align}
  &\psi_i(0,z_1,0,z_2) ~=~ g_i^{(2)}(z_1,z_2), ~i = 1,2,3,4 ~{\rm for}~ z_1 > z_2~{\rm i.e.~ on}~\mathcal{I}_2 := \mathcal{I} \cap \Omega_2,\nonumber\\
  &\psi_2(t,z+0,t,z-0) ~=~ h^+_2(t,z)~{\rm for}~t \geq 0,~{\rm i.e.~on} ~\mathscr{C},\nonumber\\
  &\psi_3(t,z+0,t,z-0) ~=~ h^-_2(t,z)~{\rm for}~t < 0,~{\rm i.e.~on} ~\mathscr{C}.
  \label{eq:ibvp2}
 \end{align}
 \end{enumerate}
Here,``$\pm 0$'' denotes the corresponding limits, e.g. $\psi(z-0,z+0) := \lim_{\varepsilon \rightarrow 0} \psi(z- \varepsilon, z+ \varepsilon)$.\\ Furthermore, it is required that the initial conditions satisfy the boundary conditions, i.e.:
\begin{align}
 g_3^{(1)}(z,z) ~&=~ h^+_1(0,z)~\forall z \in \R,\nonumber\\
 g_2^{(1)}(z,z) ~&=~ h^-_1(0,z)~\forall z \in \R,\nonumber\\
 g_2^{(2)}(z,z) ~&=~ h^+_2(0,z)~\forall z \in \R,\nonumber\\
 g_3^{(2)}(z,z) ~&=~ h^-_2(0,z)~\forall z \in \R
 \label{eq:compatibvp}
\end{align}
and also that these transitions between initial and boundary values be of regularity $C^k$.
 \label{thm:ibvp}
\end{theorem}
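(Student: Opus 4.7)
The plan is to invoke Lemma \ref{thm:generalsolution} on each of the two connected components of $\Omega$ provided by Lemma \ref{thm:omega12}, thereby reducing the IBVP to the algebraic task of reconstructing the four scalar profile functions $f_1, f_2, f_3, f_4 : \R^2 \to \C$ from the prescribed initial and boundary data. Uniqueness will then be automatic from Lemma \ref{thm:generalsolution}, and since $\Omega_1$ and $\Omega_2$ are disjoint and cannot be joined by a curve in $\Omega$, the data on them are logically independent; it therefore suffices to treat $\Omega_1$, the case of $\Omega_2$ being identical under the interchange $z_1 \leftrightarrow z_2$.

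The key geometric step is, for each spin component $\psi_i$ and each point $P = (t_1,z_1,t_2,z_2) \in \Omega_1$, to determine where the two-dimensional characteristic surface through $P$ first meets the data set: either $\mathcal{I}_1$ or $\mathscr{C} \cap \overline{\Omega_1}$. A direct computation using the space-like constraint $|t_1 - t_2| < z_2 - z_1$ shows that the characteristic arguments $(a, b) = (z_1 \mp t_1,\, z_2 \mp t_2)$ of $\psi_1$ and $\psi_4$ always satisfy $a < b$ on $\Omega_1$, so these surfaces never touch $\mathscr{C}$ and one simply sets $f_1 = g_1^{(1)}$, $f_4 = g_4^{(1)}$ on $\{a < b\}$, with no boundary data required. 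By contrast, for $\psi_2$ (argument $(z_1 - t_1,\, z_2 + t_2)$) the surface meets $\mathcal{I}_1$ precisely when $a < b$ and otherwise meets $\mathscr{C}$ at a unique point with $t = \tfrac{1}{2}(t_1 + t_2) + \tfrac{1}{2}(z_2 - z_1) < 0$; a symmetric statement holds for $\psi_3$, where the $\mathscr{C}$-hit has $t \geq 0$. This explains precisely why $h_1^-$ is prescribed only for $t < 0$ (as datum for $\psi_2$) and $h_1^+$ only for $t \geq 0$ (as datum for $\psi_3$), and why no further boundary data are needed.

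Guided by this picture, I would define $f_2$ piecewise on $\R^2$ by
\[
  f_2(a,b) = \begin{cases} g_2^{(1)}(a,b), & a \leq b, \\[2pt] h_1^-\!\bigl(\tfrac{b-a}{2},\, \tfrac{a+b}{2}\bigr), & a \geq b, \end{cases}
\]
and $f_3$ analogously from $g_3^{(1)}$ on $\{a \leq b\}$ and from $h_1^+(\tfrac{a-b}{2},\tfrac{a+b}{2})$ on $\{a \geq b\}$; the linear change of variables comes directly from solving the defining relations $(a,b) = (z \mp t,\, z \pm t)$ at the coincidence points identified above. Substituting these $f_i$ into \eqref{eq:generalsolution} then produces a function on $\Omega_1$ which by construction satisfies the initial data (along $t_1 = t_2 = 0$) and both boundary conditions (along $z_1 = z_2$).

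The main obstacle---and the step I expect to require the most care---is to verify $C^k$-regularity of the glued $f_2$ and $f_3$ across the diagonal $\{a = b\}$, which is realized as an interior limit by genuine points of $\Omega_1$ and so cannot be ignored. On each closed half-plane $\{a \leq b\}$ and $\{a \geq b\}$, $C^k$-regularity is immediate from the hypotheses on the $g_i^{(j)}$, $h_j^\pm$ together with the linearity of the substitution; the zeroth-order match along $\{a = b\}$ is exactly the content of the compatibility equations \eqref{eq:compatibvp}, while matching of all partial derivatives up to order $k$ is guaranteed by the theorem's assumption that the transitions between initial and boundary values be of regularity $C^k$. Once $f_2, f_3$ are verified to be $C^k$ on all of $\R^2$, formula \eqref{eq:generalsolution} gives a $C^k$ solution on $\Omega_1$; the identical construction on $\Omega_2$ completes the proof.
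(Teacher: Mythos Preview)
Your proposal is correct and follows the same approach as the paper: characteristic analysis via Lemma~\ref{thm:generalsolution}, the case distinction on the sign of $a-b$ for each component on $\Omega_1$, piecewise definition of the profile functions $f_i$, and the observation that \eqref{eq:compatibvp} is exactly the $C^k$-matching condition across the diagonal. The only substantive addition in the paper's version is an explicit construction of curves $\gamma_i:[0,1]\to S_i\cap\overline{\Omega_1}$ joining an arbitrary point of $\Omega_1$ to its initial or boundary data point along the characteristic, which makes concrete that each $S_i(a,b)\cap\Omega_1$ is path-connected---a fact your argument uses implicitly when you invoke Lemma~\ref{thm:generalsolution} to conclude that $\psi_i$ is a \emph{single-valued} function of its characteristic coordinates and hence that uniqueness is ``automatic.''
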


\begin{proof}
 We only show the statement for $\Omega_1$; the one for $\Omega_2$ follows analagously. The proof is structured as follows: First, we identify  the part of $\Omega_1$ where each component of $\psi$ is formally determined by initial data, i.e. their domain of dependence. Next, we check if there are also parts of $\Omega_1$ where the $\psi_i$ are not specified by initial data. We continue with demonstrating that the above-mentioned boundary conditions formally yield the missing values of the $\psi_i$. Subsequently, we make sure that the constructions actually work by explicitly demonstrating that there exist curves within the characteristic surfaces connecting each point in $\Omega$ with exactly one initial or boundary value. Finally, we write explicitly write down the solution in terms of initial data and show that it is indeed $C^k$.
 \begin{enumerate}
  \item \textit{Domain of dependence of the initial data:} Consider the initial conditions in \eqref{eq:ibvp1}. Using the general solution (eq. \eqref{eq:generalsolution}), we find:
    $\psi_i(0,z_1,0,z_2) = f_i(z_1,z_2) \stackrel{!}{=} g_i^{(1)}(z_1,z_2)$, $z_1 < z_2$. Formally, this equation determines $f_i = f_i(x,y)$ as a function on $\{ (x,y) \in \R^2 : x < y\}$. Geometrically, this means that the characteristics $S_i(x,y)$ intersect $\mathcal{I}$ in a single point $(0,x,0,y)_i$ for all $i$. Then $\psi_i$ is constant along $S_i(x,y)$. This consideration demonstrates uniqueness. However, existence is only guaranteed if one can connect the initial values with a continuous curve within $S_i$ that also remains in $\Omega_1$. This is shown under point 4.
  \item \textit{Complement of the domain of dependence of the initial data:}
    \begin{enumerate}
     \item $\psi_1(t_1,z_1,t_2,z_2) = f_1(z_1-t_1,z_2-t_2)$: We know from 1. that there exist points $(t_1,z_1,t_2,z_2) \in \Omega_1$ such that $z_1-t_1 < z_2-t_2$. However, is $z_1-t_2 > z_2-t_2$ also possible in $\Omega_1$? To answer this question, consider: $z_1-t_2 > z_2-t_2 \Leftrightarrow z_1 - z_2 \geq t_1-t_2$. In $\Omega_1$, $z_1-z_2 < 0$ which implies $t_1-t_2 < 0$ and therefore $|z_1-z_2| < |t_1 -t_2|$. This inequality states that the configuration $(t_1,z_1,t_2,z_2)$ has to be time-like, in contradiction with $\Omega_1 \subset \mathscr{S}$. So there are no points in $\Omega_1$ which require the function $f_1(x,y)$ to be defined for $x > y$. We proceed similarly for the other components.
     \item $\psi_4(t_1,z_1,t_2,z_2) = f_4(z_1+t_1,z_2+t_2)$: $z_1+t_1 > z_2+t_2 \Leftrightarrow t_1-t_2 > z_2-z_1$. Since $z_1 < z_2$ in $\Omega_1$, we obtain: $|t_1 - t_2| > |z_1-z_2|$, so also $f_4(x,y)$ is only required for $x < y$.
     \item $\psi_2(t_1,z_1,t_2,z_2) = f_2(z_1-t_1,z_2+t_2)$: $z_1-t_1 > z_2 + t_2 \Leftrightarrow -t_1 -t_2 > z_2 - z_1$. This time, the inequality can always be satisfied, e.g by choosing $z_1 < z_2$ arbitrarily and $t_1 = t_2 \equiv t$ with $t < (z_1-z_2)/2$. Thus, $f_2(x,y)$ is not yet determined fully by initial values. Note that this case appears only for $t_1 + t_2 < 0$.
     \item $\psi_3(t_1,z_1,t_2,z_2) = f_3(z_1+t_1,z_2-t_2)$: $z_1+t_1 > z_2-t_2 \Leftrightarrow t_1+t_2 > z_2-z_1$. Again, this can happen for all values of $z_1+t_2, z_2-t_2$, e.g. for $t_1 = t_2 \equiv t$ with $t > (z_2 - z_1)/2$. Note that this case requires $t_1 + t_2 > 0$.
    \end{enumerate}
  \item \textit{Domain of dependence of the boundary values:}
    \begin{enumerate}
     \item The condition $\psi_2(t,z-0,t,z+0) \stackrel{!}{=} h_1^-(t,z),~t < 0$ yields (leaving away the limit ``$\pm 0$'' for notational ease): $f_2(z-t,z+t) = h_1^-(t,z)$. Indeed, this determines the missing values $f_2(x,y),~x \geq y$ exactly once, as the map $\Phi: \{ (t,z) \in \R^2 : t < 0 \} \rightarrow \{ (x,y) \in \R^2 : x > y \},~(t,z) \mapsto (z-t,z+t)$ is bijective.
     \item Similarly, the condition $\psi_3(t,z-0,t,z+0)\stackrel{!}{=} h_1^+(t,z),~t \geq 0$ determines $f_3(x,y),~x\geq y$ exactly once as the map $\tilde{\Phi}: \{ (t,z) \in \R^2 : t \geq 0 \} \rightarrow \{ (x,y) \in \R^2 : x \geq y \},~(t,z) \mapsto (z+t,z-t)$ is bijective.
    \end{enumerate}
  \item \textit{Proof of existence:} We have to make sure that both initial values as well as boundary values can be transported along a mult-time characteristic while staying in $\Omega_1$. Then the above considerations show that the functions $f_i$ are determined uniquely.
  \begin{enumerate}
   \item For $\psi_1$: We have to show that there exists a continuous curve connecting $(t_1,z_1,t_2,z_2)$ with $(0,z_1-t_1,0,z_2-t_2)$ while staying within a multi-time characteristic $S_1$ defined by $z_1-t_1 = c_1,~z_2-t_2 = c_2$ and also in $\Omega_1$. In fact, such a path is given by:
   \begin{equation}
     \gamma_1: [0,1] \rightarrow S_1 \cap \Omega_1,~~~\gamma_1(\tau) :=  (\tau t_1, z_1 - t_1 + \tau t_1, \tau t_2, z_2 - t_2 + \tau t_2).        
   \end{equation}
   Obviously: $\gamma_1(0) = (0,z_1-t_1,0,z_2-t_2)$, $\gamma_1(1) = (t_1,z_1,t_2,z_2)$. Besides, the $z_1$-component of $\gamma_2(\tau)$ has to be smaller than the $z_2$-component: $z_1-t_1+ \tau t_1 < z_2 - t_2 + \tau t_2 \Leftrightarrow z_2 - z_1 > (t_2-t_1)(1-\tau)$. This inequality is satisfied because in $\Omega_1$, we have $z_1 < z_2$ and $|z_1-z_2| > |t_1-t_2|$.\\
   Furthermore, one has to ensure that $\gamma_1(\tau)$ always yields a space-like configuration. To see this, consider:
   \begin{align*}
    \tau^2(t_1 - t_2)^2 ~&<~ (z_1 - t_1 + \tau t_1 - z_2 + t_2 - \tau t_2)^2\nonumber\\
    \Leftrightarrow ~~~0 ~&<~ [(z_1-z_2) + (t_2-t_1)]^2 -2\tau (z_1-z_2)(t_2-t_1) - 2 \tau(t_2-t_1)^2.\nonumber
   \end{align*}
   Now we use $-(t_2-t_1)^2 > -(z_2-z_1)^2$ for the last summand which yields:
   \begin{align*}
    &[(z_1-z_2) + (t_2-t_1)]^2 -2\tau (z_1-z_2)(t_2-t_1) - 2 \tau(t_2-t_1)^2\\
     &~~~> ~[(z_1-z_2) + (t_2-t_1)]^2 -\tau [(z_1-z_2)^2 + 2 (z_1-z_2)(t_2-t_1) - (t_2-t_1)^2]\nonumber\\
      &~~~= ~[(z_1-z_2) + (t_2-t_1)^2](1-\tau).
   \end{align*}
   For $\tau \in (0,1)$ this is indeed greater than zero and for $\tau = 0,1$ the claim is evident, anyway.\\
   For the other components we only state the corresponding curves. The proof that they stay within $S_i \cap \Omega_1$ is analogous to the one above. In case of $\psi_2, \psi_3$ the curves start at boundary values, i.e. stay only within $S_i \cap \overline{\Omega}_1$.
   \item For $\psi_4$:
     \begin{align}
       \gamma_4: [0,1] \rightarrow S_4 \cap \Omega_1,~~~\gamma_4(\tau) :=  (\tau t_1, z_1 + t_1 - \tau t_1, \tau t_2, z_2 + t_2 - \tau t_2).
    \end{align}
   \item For $\psi_2$:
    \begin{align}
     &\gamma_2: [0,1] \rightarrow S_4 \cap \overline{\Omega}_1,\nonumber\\
     &\gamma_2(\tau) := \left\{  \begin{array}{l} 
				   \scriptstyle (\tau t_1, z_1 - t_1 + \tau t_1, \tau t_2, z_2 + t_2 - \tau t_2)~~~{\rm for}~z_1-t_1 < z_2 + t_2;\\
				   \scriptstyle \left( (-z_1 + z_2 + t_1 + t_2)/2 + \tau(z_1-z_2+t_1-t_2), (z_1+z_2-t_1+t_2)/2 + \tau(z_1-z_2+t_1-t_2),\right.\\
                                   \scriptstyle \left.(-z_1 + z_2 + t_1 + t_2)/2 + \tau(z_1-z_2-t_1+t_2), (z_1+z_2-t_1+t_2)/2 + \tau(-z_1+z_2+t_1-t_2) \right)\\
				   \scriptstyle ~{\rm for}~z_1-t_1 > z_2 + t_2.
                                     \end{array} \right.
    \end{align}
    The rather lengthy formula in the second case arises from a simple consideration. As before, $p = (t_1,z_1,t_2,z_2)$ is the point where we want to show the solution to be determined. Next, one determines the point $(t,z,t,z)$ of intersection of $S_2(z_1-t_1,z_2+t_2)$ with $\mathscr{C}$, obtaining $t = (-z_1+z_2+t_1+t_2)/2$ and $z = (z_1+z_2-t_1+t_2)/2$. Then: $\gamma_2(\tau) = (t+ \tau(t_1-t), z+\tau(z_1-z),t+\tau(t_2-t),z-\tau(z_2-z))$.
   \item For $\psi_3$:
    \begin{align}
     &\gamma_3: [0,1] \rightarrow S_3 \cap \overline{\Omega}_1,\nonumber\\
     &\gamma_3(\tau) := \left\{  \begin{array}{l} 
				   \scriptstyle (\tau t_1, z_1 + t_1 - \tau t_1, \tau t_2, z_2 - t_2 + \tau t_2)~~~{\rm for}~z_1+t_1 < z_2 - t_2;\\
				   \scriptstyle \left( (z_1-z_2+t_1+t_2)/2 + \tau(-z_1+z_2+t_1-t_2), (z_1+z_2+t_1-t_2)/2 + \tau(z_1-z_2-t_1+t_2),\right.\\
                                   \scriptstyle \left.(z_1 - z_2 + t_1 + t_2)/2 + \tau(-z_1+z_2-t_1+t_2), (z_1+z_2+t_1-t_2)/2 + \tau(-z_1+z_2-t_1+t_2) \right)\\
				   \scriptstyle ~{\rm for}~z_1+t_1 > z_2 - t_2.
                                  \end{array} \right.
    \end{align}
    The expression in the second case results from an analogous consideration as for $\psi_2$, the only change being the use of the point $(t,z,t,z)$ of intersection of $S_3(z_1+t_1,z_2-t_2)$ with $\mathscr{C}$.
  \end{enumerate}
  \item \textit{Explicit solution and $C^k$ property:} Collecting the results from the previous points, we obtain on $\Omega_1$:
     \begin{align}
      \psi_1(t_1,z_1,t_2,z_2)~&=~g_1^{(1)}(z_1-t_1,z_2-t_2),\nonumber\\
      \psi_2(t_1,z_1,t_2,z_2)~&=~\left\{ \begin{array}{l}
                                          \scriptstyle g_2^{(1)}(z_1-t_1,z_2+t_2)~~~{\rm for}~z_1-t_1 < z_2 + t_2\\
					  \scriptstyle h_1^-((-z_1 + z_2 + t_1 + t_2)/2,(z_1+z_2-t_1+t_2)/2)~~~{\rm for}~z_1-t_1 \geq z_2 + t_2
                                        \end{array}, \right.\nonumber\\
      \psi_3(t_1,z_1,t_2,z_2)~&=~\left\{ \begin{array}{l}
                                          \scriptstyle g_3^{(1)}(z_1+t_1,z_2-t_2)~~~{\rm for}~z_1+t_1 < z_2 - t_2\\
					  \scriptstyle h_1^+((z_1-z_2+t_1+t_2)/2,(z_1+z_2+t_1-t_2)/2)~~~{\rm for}~z_1+t_1 \geq z_2 - t_2
                                        \end{array}, \right.\nonumber\\
      \psi_4(t_1,z_1,t_2,z_2)~&=~g_4^{(1)}(z_1+t_1,z_2+t_2).
      \label{eq:explicitsolution}
     \end{align}
    From this formula, we immediately see that $\psi_1,\psi_4$ are $\C^k$ on $\Omega_1$ as $g_1^{(1)},g_2^{(1)}$ are $C^k$. For the other two components a similar argument holds true if additionally the transition between the two cases is $C^k$.
    \begin{enumerate}
     \item For $\psi_2$: The critial points are at $u := z_1-t_1 = z_2 + t_2$. We obtain as a condition that
      \begin{equation}
       g_2^{(1)}(u,u) ~\stackrel{!}{=}~ h_1^-(0,u)~\forall u \in \R
      \end{equation}
      and that this transition be $C^k$. We recognize this as one of the conditions in \eqref{eq:compatibvp} in the statement of the theorem.
     \item For $\psi_3$: The critial points are at $v := z_1+t_1 = z_2 - t_2$. We obtain as a condition that
      \begin{equation}
       g_3^{(1)}(v,v) ~\stackrel{!}{=}~ h_1^+(0,v)~\forall v \in \R
      \end{equation}
      and that this transition be $C^k$. Again, this is one of the conditions in \eqref{eq:compatibvp}. \qed
    \end{enumerate}
 \end{enumerate}
\end{proof}

\begin{remark}
 Note that for the definition of the functions $h_j^\pm$ one can make use of those components of $\psi$ that are already determined by initial values at the boundary point $(t,z,t,z)$ in question.
\end{remark}

\section{Probability conservation}

In this section we first introduce relativistic notation in order to formulate an adequate relativistic notion of probability conservation. A geometrical picture involving a $2d$-form is developed which enables us to prove the main result of this section: we identify a general class of conditions on the tensor current of the theory under which probability conservation is guaranteed. It is shown that these conditions are equivalent to certain linear relations between the components of the wave function that are covered by theorem \ref{thm:ibvp}.

\subsection{Relativistic notation}
We denote coordinates of particles in $\R^{1+d}$ by $x_i := (t_i,\mathbf{x}_i),~i = 1,2,...,N$. Their components are called $x_i^\mu,~\mu = 0,..,d$. Here, $N = 2$ and $d = 1$. Then: $x_i = (t_i,z_i)$. Partial derivatives with respect to $x_i^\mu$ are abbreviated by $\partial_{i,\mu}$. The Dirac gamma matrices are denoted by $\gamma^\mu$. They satisfy the Clifford algebra relations
\begin{equation}
 \gamma^\mu \gamma^\nu + \gamma^\nu \gamma^\mu ~=~ 2 g^{\mu \nu} \, \id,~~~\mu,\nu = 0,...,d.
 \label{eq:clifford}
\end{equation}
For $d = 1$, these are $2\times 2$-matrices. We choose the following representation:
\begin{equation}
 \gamma^0 = \sigma_1,~~~\gamma^1 = \sigma_1 \sigma_3.
 \label{eq:gamma1d}
\end{equation}
$\gamma_i^\mu$ stands for the $\mu$-th gamma matrix acting on the spin index of the $i$-th particle, i.e.:
\begin{equation}
 \gamma_1^\mu = \gamma^\mu \otimes \id,~~~\gamma_2^\nu = \id \otimes \gamma^\nu.
 \label{eq:gamma}
\end{equation}
Using this notation, we can rewrite the two-time system \eqref{eq:twotime} as:
\begin{equation}
 i \gamma_k^\mu \partial_{k,\mu} \, \psi(x_1,x_2) = 0,~~~k = 1,2
 \label{eq:reltwotime}
\end{equation}
where summation over upper and lower Greek indices is understood.\\
Let $\overline{\psi} := \psi^\D \gamma_1^0 \gamma_2^0$ denote the Dirac adjoint for two particles. Here, $\psi^\D$ stands for the conjugate transposed of $\psi$. Then eq. \eqref{eq:reltwotime} and the corresponding equation for $\overline{\psi}$ imply continuity equations for the tensor current $j$, defined by:
\begin{align}
 &j^{\mu \nu}(x_1,x_2) ~=~ \overline{\psi}(x_1,x_2) \gamma_1^\mu \gamma_2^\nu \psi(x_1,x_2) \label{eq:j},\\
 {\rm i.e.}~~~&\partial_{1,\mu} j^{\mu \nu}(x_1,x_2)  ~=~ \partial_{2,\nu} j^{\mu \nu}(x_1,x_2) ~=~ 0. \label{eq:continuity}
\end{align}
Note that $j^{00} = \psi^\D \psi$ yields the usual $|\psi|^2$ probability density.

\subsection{A relativistic notion of probability conservation}
In order to find an adequate relativistic notion of probability conservation, consider the usual non-relativistic notion:
\begin{equation}
 \int d^d x_1 \int d^d x_2  ~ |\psi|^2(t,\mathbf{x}_1, t,\mathbf{x}_2) ~=~ 1,~~~{\rm independent~of~t}.
 \label{eq:probcons1}
\end{equation}
We can rewrite this equation using $j^{00} = |\psi|^2$, making the geometric structure explicit:
\begin{equation}
 \int_{\Sigma_t} d\sigma(x_1) \int_{\Sigma_t} d\sigma(x_2)  ~ j^{00}(x_1,x_2) ~=~ 1,~~~{\rm independent~of~t}
 \label{eq:probcons2}
\end{equation}
where $\Sigma_t := \{ (\tau,\mathbf{x}) \in \R^{1+d} : \tau = t\}$.\\
It is now easily recognized that a special family of hypersurfaces, the equal time surfaces $\Sigma_t$ in a distinguished Lorentz frame, are used in the non-relativistic formulation. This flaw can be overcome by demanding the corresponding condition for all space-like hypersurfaces\footnote{Throughout the paper we assume that space-like hypersurfaces are smooth and possess a normal covector field at every point.} $\Sigma$. Let $n$ denote the normal covector field at $\Sigma$. We propose the following condition\footnote{A related idea is used in \cite[p. 163]{dirk_phd} to define $N$-particle Hilbert spaces associated with a space-like hypersurface $\Sigma$.}:
\begin{equation}
 \int_{\Sigma} d\sigma_1(x_1) \int_{\Sigma} d\sigma_2(x_2)  ~ n_\mu(x_1) n_\nu(x_2) \, j^{\mu \nu}(x_1,x_2) ~=~ 1,~~~{\rm independent~of~}\Sigma.
 \label{eq:probcons}
\end{equation}
This is justified as follows: Firstly, the condition is completely geometric and does not attribute significance to a special class of space-like hypersurfaces. Secondly, for $\Sigma_t$ one has $n \equiv (1,0,...,0)$, so eq. \eqref{eq:probcons} correctly reduces to eq. \eqref{eq:probcons2}. Thirdly, the meaning of eq. \eqref{eq:probcons} as expressing probability conservation can be established rigorously by a relativistic Bohmian analysis (see \cite{hbd,hbd_subsystems}).\\
In the case of a domain $\Omega \subset \R^{N(1+d)}$ with boundary, such as $\mathscr{S}$, one should restrict the range of integration to values in the domain and use the condition
\begin{equation}
 \int_{(\Sigma \times \Sigma)\cap \Omega} d\sigma_1(x_1) \wedge d\sigma_2(x_2)  ~ n_\mu(x_1) n_\nu(x_2) \, j^{\mu \nu}(x_1,x_2) ~=~ 1,~~~{\rm independent~of~}\Sigma.
 \label{eq:probconsbound}
\end{equation}
The idea is to employ Stokes' theorem to determine the conditions on $j$ such that probability conservation in the sense of eq. \eqref{eq:probconsbound} is guaranteed. To this end, it is useful to recognize
\begin{equation}
 \omega_j ~:=~ d\sigma_1 \wedge d\sigma_2  \,n_\mu n_\nu \, j^{\mu \nu}
 \label{eq:defcurrentform}
\end{equation}
as an $N  d$-form\footnote{See \cite[chap. 16.1]{duerr} for a similar idea for the non-relativistic case.}. In order to express this \textit{current form} by the coordinate differentials $d x_i^\mu$, we make use of the following results \cite[p. 435]{koenigsberger2}:
\begin{align}
 d \sigma_i(x_i) ~&=~ \sum_{\mu = 0}^d (-1)^\mu n_\mu(x_i) \, d x_i^0 \wedge \cdots \widehat{d x_i^\mu} \cdots \wedge d x_i^d,\label{eq:differentialforms1}\\
 n_\mu \, d \sigma_i ~&=~ (-1)^\mu \, d x_i^0 \wedge \cdots \widehat{d x_i^\mu} \cdots \wedge d x_i^d
 \label{eq:differentialforms2}
\end{align}
where $\widehat{d x_i^\mu}$ means that the corresponding factor should be omitted from the wedge product.
Using eq. \eqref{eq:differentialforms2} in the expression for $\omega_j$, we obtain:
\begin{lemma}
 \begin{enumerate}
  \item The current form can be rewritten as
    \begin{align}
     \omega_j ~=~ \sum_{\mu,\nu = 0}^{d} (-1)^\mu (-1)^\nu j^{\mu \nu} \, d x_1^0 \wedge \cdots \widehat{d x_1^\mu} \cdots \wedge d x_1^d ~ \wedge ~ dx_2^0 \wedge \cdots \widehat{d x_2^\nu} \cdots \wedge d x_2^d.
      \label{eq:currentform}
      \end{align}
  \item Probability conservation on domains $\Omega \subset \R^{2(1+d)}$ with boundary can be expressed by the following condition on the current form:
\begin{align}
 &\int_{(\Sigma \times \Sigma) \cap \Omega} \omega_j~=~ 1,~~~{\rm independent~of~}\Sigma.
 \label{eq:probconscrit}
\end{align}
 \end{enumerate}
 \label{thm:currentform}
\end{lemma}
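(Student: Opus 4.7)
The plan is that both parts are essentially formal, following directly from the identities \eqref{eq:differentialforms1}--\eqref{eq:differentialforms2} that relate the normal covector to the induced surface form on a smooth space-like hypersurface; no new analytic input is required beyond what is already in the excerpt.

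For part~1, I would start from the defining expression $\omega_j = n_\mu(x_1)\,n_\nu(x_2)\,j^{\mu\nu}(x_1,x_2)\, d\sigma_1 \wedge d\sigma_2$ and absorb the two summed factors $n_\mu$ and $n_\nu$ into $d\sigma_1$ and $d\sigma_2$ separately. Since $n_\mu n_\nu j^{\mu\nu}$ is a scalar double sum, the scalar factors may be distributed past the wedge product term by term, pairing $n_\mu$ with $d\sigma_1$ and $n_\nu$ with $d\sigma_2$. Applying eq.~\eqref{eq:differentialforms2} index by index (no summation on the right-hand side) then replaces $n_\mu(x_i)\,d\sigma_i$ by $(-1)^\mu\, dx_i^0 \wedge \cdots \widehat{dx_i^\mu} \cdots \wedge dx_i^d$. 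Since the $dx_1$- and $dx_2$-forms live on disjoint coordinate factors, reassembling the wedge product introduces no additional sign, and summing over $\mu,\nu$ yields exactly the formula claimed in (\ref{eq:currentform}).

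For part~2, the strategy is to combine part~1 with eq.~\eqref{eq:probconsbound}. On the $2d$-dimensional oriented submanifold $(\Sigma\times\Sigma)\cap\Omega$, the wedge $d\sigma_1 \wedge d\sigma_2$ is precisely the product of the induced surface measures appearing in \eqref{eq:probconsbound}, and $n_\mu n_\nu j^{\mu\nu}$ is a scalar coefficient. Therefore $\omega_j$, restricted to that submanifold, equals the scalar integrand of \eqref{eq:probconsbound} multiplied by the product volume form, so $\int_{(\Sigma \times \Sigma)\cap \Omega} \omega_j$ reproduces that integral verbatim. The condition \eqref{eq:probconscrit} is thus just a compact geometric restatement of \eqref{eq:probconsbound}, and probability conservation transfers directly.

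The only place where care is needed is the sign bookkeeping when distributing the scalar factors $n_\mu,n_\nu$ through $d\sigma_1 \wedge d\sigma_2$ and recollecting the two $(-1)^\mu,(-1)^\nu$; but since the two particles occupy disjoint coordinate factors in $\R^{2(1+d)}$, this is mechanical. I do not expect a real obstacle. The conceptual payoff of the lemma is not the verification itself but the recognition of $\omega_j$ as a globally defined $Nd$-form on configuration space-time, which is exactly what makes Stokes' theorem applicable in the subsequent step that singles out the admissible boundary conditions.
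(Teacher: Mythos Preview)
Your proposal is correct and matches the paper's approach: the paper itself offers no detailed proof of this lemma, merely prefacing it with the remark that one obtains the statement by ``using eq.~\eqref{eq:differentialforms2} in the expression for $\omega_j$,'' which is exactly the substitution you carry out for part~1, while part~2 is the tautological rephrasing of \eqref{eq:probconsbound} you describe.
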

 The continuity equations for $j$ yield:
\begin{lemma}
 The exterior derivative of $\omega_j$ vanishes, i.e. $d \omega_j = 0$.
 \label{thm:omegajexact}
\end{lemma}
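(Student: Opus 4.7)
The plan is to compute $d\omega_j$ directly from the coordinate expression \eqref{eq:currentform} and reduce it, using the continuity equations \eqref{eq:continuity}, to zero. I would not attempt a coordinate-free argument here, because the explicit coordinate formula already contains the two building blocks I need: the combinatorial identities that turn $dx_i^\rho \wedge (\text{hatted form})$ into a volume form, and the differential identities that annihilate the contracted derivatives of $j^{\mu\nu}$.

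To set up notation, let me abbreviate $\alpha_\mu := dx_1^0 \wedge \cdots \widehat{dx_1^\mu} \cdots \wedge dx_1^d$ and $\beta_\nu := dx_2^0 \wedge \cdots \widehat{dx_2^\nu} \cdots \wedge dx_2^d$, so that
\begin{equation*}
\omega_j = \sum_{\mu,\nu=0}^{d} (-1)^{\mu+\nu} j^{\mu\nu}(x_1,x_2)\, \alpha_\mu \wedge \beta_\nu.
\end{equation*}
Since $\alpha_\mu, \beta_\nu$ have constant coefficients, applying $d$ only differentiates $j^{\mu\nu}$, and the Leibniz rule splits into contributions from $\partial_{1,\rho} j^{\mu\nu}\, dx_1^\rho$ and $\partial_{2,\sigma} j^{\mu\nu}\, dx_2^\sigma$. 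The first key step is the algebraic identity $dx_1^\rho \wedge \alpha_\mu = \delta^\rho_\mu (-1)^\mu\, dx_1^0 \wedge \cdots \wedge dx_1^d$, obtained by reinserting $dx_1^\mu$ into its deleted slot, which requires $\mu$ transpositions. The analogous identity $dx_2^\sigma \wedge \beta_\nu = \delta^\sigma_\nu (-1)^\nu\, dx_2^0 \wedge \cdots \wedge dx_2^d$ holds for the second factor.

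Putting these identities together collapses the derivative sums to contractions. For the $\partial_{1,\rho}$-piece, the factor $(-1)^\mu$ from the algebraic identity combines with the prefactor $(-1)^{\mu+\nu}$ to leave $(-1)^\nu \sum_\mu \partial_{1,\mu} j^{\mu\nu}$, which vanishes by the first of \eqref{eq:continuity}. For the $\partial_{2,\sigma}$-piece I have to first commute $dx_2^\sigma$ past the $d$-form $\alpha_\mu$, picking up an extra sign $(-1)^d$, before applying the analogous identity on $\beta_\nu$; the result is proportional to $\sum_\nu \partial_{2,\nu} j^{\mu\nu}$, which vanishes by the second of \eqref{eq:continuity}.

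The only place that genuinely requires care is the sign bookkeeping — in particular, remembering the $(-1)^d$ from commuting $dx_2^\sigma$ past $\alpha_\mu$, and confirming that the two signs $(-1)^\mu$ from inserting $dx_1^\mu$ into its deleted slot cancel the $(-1)^\mu$ prefactor (and similarly for the $\nu$ index). Once this is checked, the conclusion $d\omega_j = 0$ is immediate from \eqref{eq:continuity}. \qed
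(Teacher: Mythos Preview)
Your proposal is correct and follows essentially the same approach as the paper: compute $d\omega_j$ directly from the coordinate expression \eqref{eq:currentform} and use the continuity equations \eqref{eq:continuity} to kill the two resulting contractions. Your version is somewhat more explicit in its sign bookkeeping (introducing $\alpha_\mu,\beta_\nu$ and spelling out the Kronecker-delta identities), whereas the paper simply writes down the two sums with their signs and invokes \eqref{eq:continuity}, but the argument is the same.
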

\begin{proof}
 \begin{align}
 d \omega_j ~&=~ \sum_{\mu,\nu = 0}^{d} (-1)^\mu (-1)^\nu \partial_{1,\mu} j^{\mu \nu} \, (-1)^\mu \, d x_1^0 \wedge \cdots \wedge d x_1^d ~ \wedge ~ dx_2^0 \wedge \cdots \widehat{d x_2^\nu} \wedge \cdots \wedge d x_2^d \nonumber\\
 &~ \, + \sum_{\mu,\nu = 0}^{d} (-1)^\mu (-1)^\nu \partial_{2,\nu} j^{\mu \nu} \, (-1)^{3-\nu} \, d x_1^0 \wedge \cdots \widehat{d x_1^\mu} \cdots \wedge d x_1^d ~ \wedge ~ dx_2^0 \wedge \cdots \wedge d x_2^d \nonumber\\
 &\stackrel{\rm eq. \eqref{eq:continuity}}{=}~ 0. \qed
 \label{eq:domega}
\end{align}
\end{proof}
This result will allow us to relate the hypersurface integrals in \eqref{eq:probconscrit} using Stokes' theorem.

\subsection{Boundary conditions derived from probability conservation}
With criterion \eqref{eq:probconscrit} and the tools developed in the last section, we are almost ready to identify probability-conserving boundary conditions. Before stating the main result, we formulate a lemma that allows us to control the spreading of the wave function.
\begin{lemma}
 Consider the IBVP defined by \eqref{eq:ibvp1}, \eqref{eq:ibvp2} and let $\Sigma$ denote a space-like hypersurface. Then, if the initial data are compactly supported on $\mathcal{I}$, they are compactly supported on all sets of the form $(\Sigma \times \Sigma) \cap \Omega$.
 \label{thm:compactsupp}
\end{lemma}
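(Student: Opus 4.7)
The plan is to read the support of $\psi$ directly from the explicit solution \eqref{eq:explicitsolution} produced by theorem \ref{thm:ibvp} and then exploit the fact that along any space-like hypersurface $\Sigma \subset \R^{1+1}$ the null coordinates $u := z-t$ and $v := z+t$ are strictly monotonic. All geometric information then reduces to the statement that a compactly supported function of $u$ (or $v$) pulled back to $\Sigma$ has compactly supported trace.

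First I would treat $\Omega_1$, the argument for $\Omega_2$ being completely symmetric. Parametrise $\Sigma$ as $\sigma \mapsto (t(\sigma),z(\sigma))$ with $\dot z(\sigma)^2 - \dot t(\sigma)^2 > 0$. The identity $(\dot z-\dot t)(\dot z+\dot t) = \dot z^2-\dot t^2 > 0$ forces both factors to be nowhere zero and of a common sign, so $\sigma\mapsto u(\sigma)$ and $\sigma\mapsto v(\sigma)$ are strictly monotonic. Hence the preimage under either map of any bounded set in $\R$ is a bounded $\sigma$-interval.

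Next I would feed this into \eqref{eq:explicitsolution}. The components $\psi_1$ and $\psi_4$ depend only on $(u_1,u_2)$ and on $(v_1,v_2)$, respectively, so compact support of $g_1^{(1)}$ and $g_4^{(1)}$ on $\mathcal{I}_1$ immediately confines the pairs $(\sigma_1,\sigma_2)$ for which $\psi_1$ or $\psi_4$ can be non-zero to a bounded rectangle. For $\psi_2$ and $\psi_3$ the two-branch structure of \eqref{eq:explicitsolution} must be treated: on the \emph{initial-data} branch the same monotonicity argument applies (now in the mixed null combinations $z_1-t_1,\ z_2+t_2$, respectively $z_1+t_1,\ z_2-t_2$); on the \emph{boundary-data} branch the argument of $h_1^\pm$ is the coordinate $(t^*,z^*)$ of the intersection of the multi-time characteristic through $(x_1,x_2)$ with $\mathscr{C}$, which is again a linear combination of the null coordinates $u_i,v_i$. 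Provided the boundary data $h_1^\pm$ is itself compactly supported on $\mathscr{C}$ --- as inherited from the compact support of the initial data via the compatibility relations \eqref{eq:compatibvp} together with the point-wise structure of the admissible boundary conditions formulated in the next section --- the monotonicity argument applies once more and forces $(\sigma_1,\sigma_2)$ into a bounded set.

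Intersecting the four bounded parameter domains and taking the image in $\Sigma\times\Sigma$ produces a compact subset of $(\Sigma\times\Sigma)\cap\Omega$ outside of which $\psi$ vanishes, which is the claim. The one genuine obstacle is the transfer of compact support from the initial data on $\mathcal{I}$ to the boundary data on $\mathscr{C}$; the remainder of the proof is a geometric unpacking of the explicit solution combined with the strict monotonicity of null coordinates along space-like curves.
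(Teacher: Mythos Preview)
Your approach is the same as the paper's in spirit: both read compact support off the explicit solution \eqref{eq:explicitsolution}. The paper's proof, however, is a single sentence invoking finite propagation speed along the multi-time characteristics, while you supply the geometric details --- the strict monotonicity of the null coordinates $u=z-t$, $v=z+t$ along a space-like curve, and the component-by-component bookkeeping.

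More importantly, you put your finger on a point the paper glosses over. As literally stated, the lemma assumes only that the \emph{initial} data $g_i^{(j)}$ are compactly supported on $\mathcal{I}$; the boundary data $h_j^\pm$ in \eqref{eq:ibvp1}--\eqref{eq:ibvp2} are arbitrary $C^k$ functions, and the compatibility relations \eqref{eq:compatibvp} constrain them only along $t=0$. For generic $h_j^\pm$ the conclusion would fail: e.g.\ $h_1^+\equiv 1$ gives a $\psi_3$ that does not decay. The lemma is really only used inside theorem \ref{thm:currentconsbdyconds}, where the boundary conditions \eqref{eq:currentconsbdyconds} determine $h_j^\pm$ pointwise from components of $\psi$ that are themselves fixed by initial data at the same boundary point (cf.\ the remark after theorem \ref{thm:ibvp}); in that setting compact support does transfer from $\mathcal{I}$ to $\mathscr{C}$, exactly as you argue. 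So your flagging of this as ``the one genuine obstacle'' is accurate, and your resolution --- appealing to the structure of the admissible boundary conditions in the following section --- is the right fix.
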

\begin{proof}
 This can be seen immediately from the explicit solution \eqref{eq:explicitsolution}. (Influences propagate with finite speed along the multi-time characteristics.) \qed
\end{proof}
\begin{theorem}
  Let $\varepsilon_{\mu \nu}$ denote the Levi-Civita symbol. Assume furthermore that the initial data are of regularity $C^k,~k \in \N,$ and compactly supported on $\mathcal{I}$. Then the following conditions for the tensor current guarantee probability conservation in the sense of criterion \eqref{eq:probconscrit}:
 \begin{align}
  \varepsilon_{\mu \nu} j^{\mu \nu}(t,z-0,t,z+0) ~&\stackrel{!}{=} ~0,~~t,z \in \R, \nonumber\\
  \varepsilon_{\mu \nu} j^{\mu \nu}(t,z+0,t,z-0) ~&\stackrel{!}{=}~ 0,~~t,z \in \R.
  \label{eq:currentconsconds}
 \end{align}
 Expressed in terms of the components of $\psi$, these conditions are equivalent to:
 \begin{align}
 \psi_2(t,z-0,t,z+0) ~&\stackrel{!}{=}~ e^{-i \theta_1(t,z)} \psi_3(t,z-0,t,z+0),~~t,z \in \R,\nonumber\\
 \psi_2(t,z+0,t,z-0) ~&\stackrel{!}{=}~ e^{-i \theta_2(t,z)} \psi_3(t,z+0,t,z-0),~~t,z \in \R
  \label{eq:currentconsbdyconds}
 \end{align}
 for arbitrary functions $\theta_1, \theta_2: \R^2 \rightarrow [-\pi,\pi)$. (In order for $\psi$ to be $C^k$, they have to be $C^k$-functions, too.)
 \label{thm:currentconsbdyconds}
\end{theorem}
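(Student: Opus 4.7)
The plan is to prove part (a) -- that \eqref{eq:currentconsconds} implies the conservation law \eqref{eq:probconscrit} -- by a Stokes-theorem argument, and then part (b) -- the equivalence of \eqref{eq:currentconsconds} with \eqref{eq:currentconsbdyconds} -- by a direct algebraic computation. For (a), I would first connect two arbitrary space-like hypersurfaces $\Sigma_a$ and $\Sigma_b$ by a smooth foliation $\{\Sigma_s\}_{s \in [a,b]}$ of space-like hypersurfaces (in $1+1$ dimensions this is easy: write each $\Sigma$ as a graph $t = \tau(z)$ and interpolate affinely), and then consider the three-dimensional diagonal chain $C = \bigcup_{s \in [a,b]} (\Sigma_s \times \Sigma_s) \cap \Omega$ sitting inside the ambient $\R^4$. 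Since $\omega_j$ is a $2$-form and $d\omega_j = 0$ by Lemma \ref{thm:omegajexact}, Stokes' theorem yields $\int_{\partial C} \omega_j = 0$. The oriented boundary $\partial C$ consists of $\Sigma_b \times \Sigma_b \cap \Omega$, $-\Sigma_a \times \Sigma_a \cap \Omega$, pieces on the coincidence boundary $\mathscr{C}$ approached separately from $\Omega_1$ and $\Omega_2$, and a part at spatial infinity that vanishes by Lemma \ref{thm:compactsupp}. Provided the coincidence contributions vanish, the two endpoint integrals agree and \eqref{eq:probconscrit} follows.

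The core calculation is the pullback of $\omega_j$ to $\mathscr{C}$. Parametrizing $\mathscr{C}$ by $(t,z) \mapsto (t,z,t,z)$ gives $dt_1 = dt_2 = dt$ and $dz_1 = dz_2 = dz$, so in the explicit form \eqref{eq:currentform} the diagonal terms $j^{00}\,dz_1 \wedge dz_2$ and $j^{11}\,dt_1 \wedge dt_2$ vanish, leaving $\omega_j|_{\mathscr{C}} = (j^{01} - j^{10})\,dt \wedge dz = \varepsilon_{\mu\nu} j^{\mu\nu}\,dt \wedge dz$. Hence the $\Omega_1$-side condition $\varepsilon_{\mu\nu} j^{\mu\nu}(t,z-0,t,z+0) = 0$ and its $\Omega_2$-counterpart are exactly what is needed to kill the coincidence contribution, establishing conservation.

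For part (b), I would factorize $j^{\mu\nu} = \psi^\D \gamma_1^0 \gamma_2^0 \gamma_1^\mu \gamma_2^\nu \psi = \psi^\D (\gamma^0 \gamma^\mu) \otimes (\gamma^0 \gamma^\nu) \psi$. With $\gamma^0 = \sigma_1$ and $\gamma^1 = \sigma_1 \sigma_3$ one has $\gamma^0 \gamma^0 = \id$ and $\gamma^0 \gamma^1 = \sigma_3$, so $j^{01} = \psi^\D (\id \otimes \sigma_3) \psi$ and $j^{10} = \psi^\D (\sigma_3 \otimes \id) \psi$. In the basis $\mathcal{B}$ these matrices are $\diag(1,-1,1,-1)$ and $\diag(1,1,-1,-1)$ respectively, so $j^{01} - j^{10} = 2(|\psi_3|^2 - |\psi_2|^2)$. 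Therefore $\varepsilon_{\mu\nu} j^{\mu\nu} = 0$ at the coincidence is equivalent to $|\psi_2| = |\psi_3|$ at the coincidence, which can be written as $\psi_2 = e^{-i\theta} \psi_3$ for a real phase $\theta$, yielding \eqref{eq:currentconsbdyconds}. Provided $\psi_3$ is nonvanishing, the $C^k$ regularity of $\theta$ follows from that of $\psi_2, \psi_3$ guaranteed by Theorem \ref{thm:ibvp}.

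The most delicate point I anticipate is not the core computation but the orientation and sign bookkeeping in Stokes' theorem -- confirming that the two coincidence contributions (from $\Omega_1$ and from $\Omega_2$) are genuinely independent and each must vanish separately, rather than cancelling one another. A secondary issue is to verify that the boundary relations \eqref{eq:currentconsbdyconds} really sit within the class of admissible boundary conditions of Theorem \ref{thm:ibvp}: at each coincidence point $(t,z,t,z) \in \Omega_1$ exactly one of $\psi_2, \psi_3$ is already determined by the initial data via the multi-time characteristics, and the relation $\psi_2 = e^{-i\theta_1}\psi_3$ then fixes the corresponding boundary datum $h_1^\pm$ up to the free phase $\theta_1$, so \eqref{eq:currentconsbdyconds} is a genuine admissible specialization of the IBVP (and analogously for $\Omega_2$).
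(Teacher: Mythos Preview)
Your proposal is correct and follows essentially the same route as the paper: a Stokes-theorem argument over a region swept out by a family of space-like hypersurfaces, with the boundary contribution on $\mathscr{C}$ computed (the paper does this via relative coordinates, you via direct pullback---both give $(j^{01}-j^{10})\,dt\wedge dz$), followed by the same diagonal-matrix calculation to rewrite $j^{01}-j^{10}=2(|\psi_3|^2-|\psi_2|^2)$. One clarification regarding your ``most delicate point'': you do not need to show that the $\Omega_1$- and $\Omega_2$-contributions must vanish separately---the theorem only asserts sufficiency, and the paper explicitly treats the separate vanishing as a specialization (the weaker condition that the two boundary fluxes merely cancel would also give conservation but is discarded on physical grounds).
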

\begin{remark}
\begin{enumerate}
 \item Conditions \eqref{eq:currentconsconds} have the physical meaning that the probability flux from $\Omega_1$ into $\mathscr{C}$ and from $\Omega_2$ into $\mathscr{C}$ has to vanish, separately. They are therefore a subclass of all conditions on $j$ that lead to probability conservation. Boundary conditions with a similar meaning are widely used to express confinement of particles in certain spatial regions (see e.g. \cite[chaps. 12,14]{duerr} for a physically motivated discussion). The crucial difference here is that the boundary set is determined by internal relations between the particles, not by external geometry.
 \item Note that the boundary conditions \eqref{eq:currentconsbdyconds} are of the form \eqref{eq:ibvp1}, \eqref{eq:ibvp2} with property \eqref{eq:compatibvp}. Thus, theorem \ref{thm:ibvp} ensures existence and uniqueness of a $C^k$-solution on $\Omega_1$ and $\Omega_2$ of the corresponding IBVP.
\end{enumerate}
\end{remark}
\begin{proof}
 The idea is to use Stokes' theorem for a closed surface $S$ of the form $S = [(\Sigma_1 \times \Sigma_1) \cap \Omega] \cup [(\Sigma_2 \times \Sigma_2) \cap \Omega] \cup M$ where $\Sigma_1, \Sigma_2$ are space-like hypersurfaces. Then, because of $d \omega_j = 0$, one obtains equality of the normalization integrals \eqref{eq:probconsbound} if the contribution of $M$ vanishes. Parts of the contribution of $M$ vanish because $\psi$ is compactly supported on sets of the form $(\Sigma \times \Sigma) \cap \Omega$ according to lemma \ref{thm:compactsupp}. Demanding that the remaining parts also vanish leads to conditions on the tensor current.\\
 We split the proof into two parts: the first one to establish the conditions on the current such that the normalization integral of the wave function is equal for all hypersurfaces $\Sigma$ and the second one to derive the equivalent conditions for the components of $\psi$.
 \begin{enumerate}
  \item We first show that $S$ can be understood as a closed surface in an appropriate sense. To this end, we define finite versions of $\Sigma_1, \Sigma_2$. Pick $p_1 \in \Sigma_1$, $p_2 \in \Sigma_2$. Then let
  \begin{equation}
   \Sigma_i^R ~:=~ \{ p \in \Sigma_i : -(p^0-p_i^0)^2 + (\textbf{p}-\textbf{p}_i)^2 < R^2\},~~~i = 1,2.
   \label{eq:cutoffhypersurf}
  \end{equation}
 For $R$ large enough and $q \in [(\Sigma_i \backslash \Sigma_i^R) \times (\Sigma_i \backslash \Sigma_i^R) ]\cap \Omega $ we have $\psi(q) = 0$ as $\psi$ is compactly supported on sets of the form $(\Sigma \times \Sigma) \cap \Omega$. Consequently, one obtains
 \begin{equation}
  \int_{(\Sigma_i \times \Sigma_i) \cap \Omega} \omega_j ~=~ \int_{(\Sigma_i^R \times \Sigma_i^R) \cap \Omega} \omega_j,~~i = 1,2.
  \label{eq:sigmarnormalization}
 \end{equation}
 It is therefore permitted to replace $\Sigma_i$ with $\Sigma_i^R$ for the purpose of the argument.\\
 Now we construct a closed surface $S_R$ as follows: Let $V_{\Sigma_1, \Sigma_2} \subset \R^{1+d}$ be the volume between $\Sigma_1, \Sigma_2$, i.e. if $t_\Sigma(\mathbf{x})$ denotes the time coordinate of the unique point $p \in \Sigma$ with spatial coordinates $\mathbf{x}$, then
 \begin{equation}
  V_{\Sigma_1, \Sigma_2}~:=~ \{ (\tau, \mathbf{y}) \in \R^{1+d} : t_{\Sigma_1}(\mathbf{y}) < \tau < t_{\Sigma_2}(\mathbf{y}) \, \vee \,  t_{\Sigma_2}(\mathbf{y}) < \tau < t_{\Sigma_1}(\mathbf{y}) \}.
  \label{eq:enclosedvolume}
 \end{equation}
 Next, consider a continuous deformation of $\Sigma_1^R$ into $\Sigma_2^R$ (see fig. \ref{fig:continuousdeformation}), i.e. a smooth map
 \begin{equation}
  \Phi: [0,1] \rightarrow \{ \Sigma \subset \overline{V}_{\Sigma_1, \Sigma_2} : \Sigma~{\rm space\text{-}like~surface}\},~~~{\rm with}~\Phi(0) = \Sigma_1^R,~\Phi(1) = \Sigma_2^R.
  \label{eq:continuousdef}
 \end{equation}
\begin{figure}[bt]
   \centering
   \includegraphics[width=0.6\textwidth]{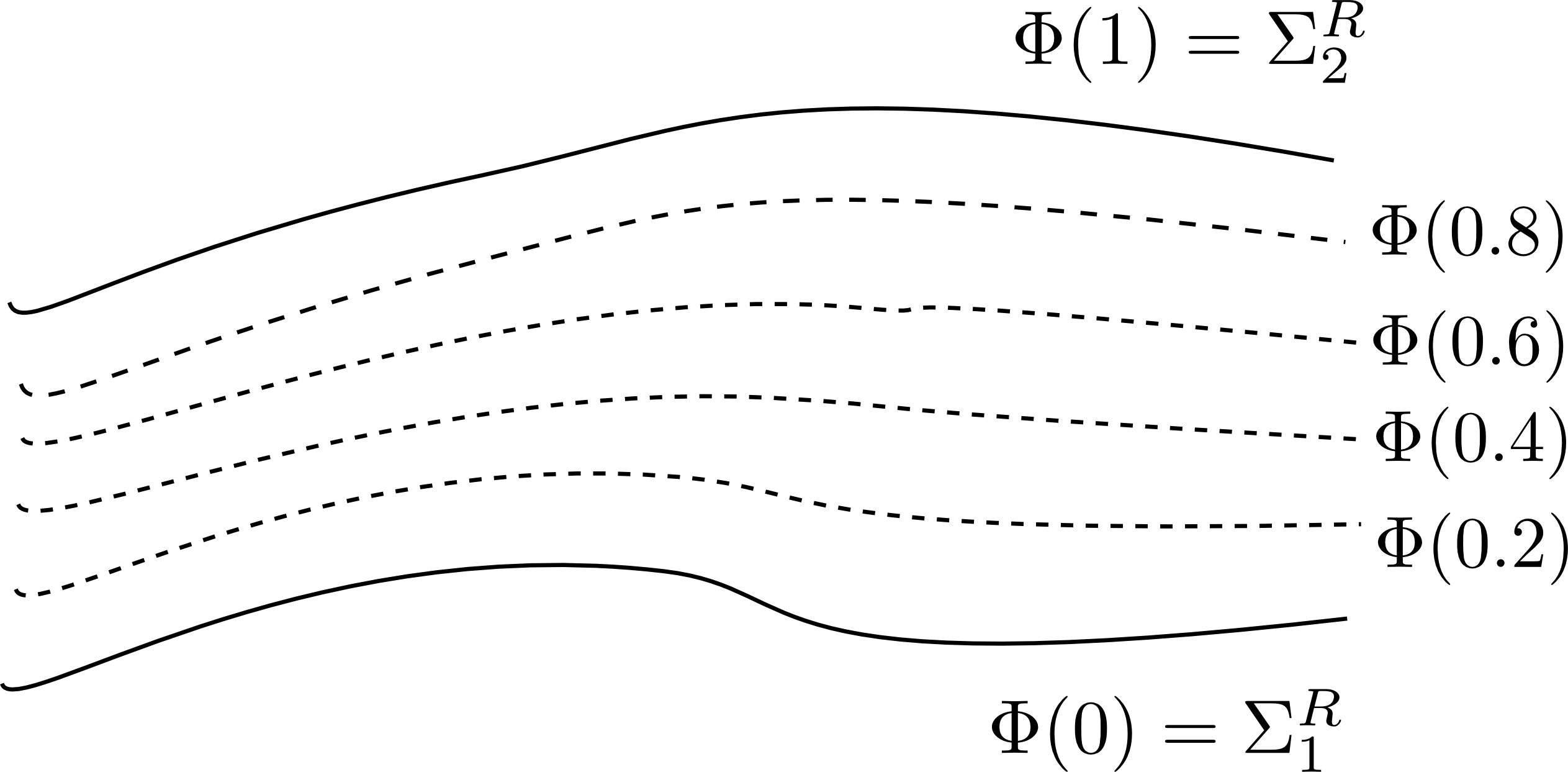}
   \caption{A continuous deformation of $\Sigma_1^R$ into $\Sigma_2^R$.}
 \label{fig:continuousdeformation}
 \end{figure}
 Now let
 \begin{equation}
  S_R~:=~ \partial \left( \bigcup_{s \in [0,1]} [\Phi(s) \times \Phi(s)] \cap \Omega \right).
  \label{eq:sr}
 \end{equation}
 By construction, $S_R$ is a closed surface. It has the form
 \begin{equation}
  S_R ~=~ [(\Sigma_1^R \times \Sigma_1^R) \cap \Omega] \cup [(\Sigma_2^R \times \Sigma_2^R) \cap \Omega] \cup M_1 \cup M_2
  \label{eq:srform}
 \end{equation}
 where $\psi \equiv 0$ on $M_2$. From eq. \eqref{eq:sr} it can be seen that $M_1$ consists of those points $p \in \Phi(s) \times \Phi(s)$ which do not lie in $\Omega$. As $\Phi(s)$ is a space-like surface, all points $p = (x,y),~x,y \in \Phi(s)$ with $x\neq y$ are contained in $\Omega$. The remaining ones therefore belong to the set $\mathscr{C}$ of coincidence points and it follows that $M_1 \subset \mathscr{C}$.\\
 At this point, a subtlety appears: Recall that the values of $\psi$ are not defined on $\mathscr{C}$ (as $\mathscr{C} \nsubseteq \Omega$). Rather, one has to consider the corresponding limits in $\Omega_1$ and $\Omega_2$. Instead of $S_R$, one should consider the union of $S_R^{(1)}$ with $S_R^{(2)}$ where $S_R^{(i)},~i=1,2$ are defined by eq. \eqref{eq:sr} using $\Omega_i,~i= 1,2$ instead of $\Omega$. They have the form
  \begin{equation}
  S_R^{(i)} ~=~ [(\Sigma_1^R \times \Sigma_1^R) \cap \Omega_i] \cup [(\Sigma_2^R \times \Sigma_2^R) \cap \Omega_i] \cup M_1 \cup M_2^{(i)}
  \label{eq:srform2}
 \end{equation}
 where $M_1$ is the same as above and $\psi \equiv 0$ on $M_2^{(i)},~i=1,2$.\\
 Let $V_R$ denote the volume enclosed by $S_R$. Using Stokes' theorem for $S_R$, we obtain:
 \begin{align}
  \int_{S_R} \omega_j ~&=~ \int_{V_R} d \omega_j ~\stackrel{\rm eq.\eqref{eq:domega}}{=}~0\nonumber\\
 \Rightarrow~~~\int_{(\Sigma_1^R \times \Sigma_1^R) \cap \Omega} \omega_j ~&=~ \int_{(\Sigma_2^R \times \Sigma_2^R) \cap \Omega} \omega_j ~- \int_{M_1} \omega_j^{(1)} ~+ \int_{M_1} \omega_j^{(2)}
  \label{eq:stokes1}
 \end{align}
 where $\omega_j^{(i)}$ is shorthand for taking the limit $\varepsilon \rightarrow 0$ for $\psi(t,z+(-1)^{i}\varepsilon,t,z-(-1)^{i}\varepsilon)$ in the expression for $\omega_j$. Orientation conventions have to be considered to obtain the correct signs in front of the integrals.\\
 Thus, we obtain independence of the normalization integrals from $\Sigma$ if
 \begin{equation}
  \int_{M_1} \omega_j^{(1)} ~=~ \int_{M_1} \omega_j^{(2)}.
  \label{eq:normindptcond1}
 \end{equation}
 We specialize to the case\footnote{The general case would lead to compensating currents from $\Omega_1$ to $\Omega_2$ and the other way around. This would mean that the particles could pass each other -- which we regard as physically questionable in $d = 1$.}
 \begin{equation}
  \int_{M_1} \omega_j^{(1)} ~=~ \int_{M_1} \omega_j^{(2)} ~=~0.
  \label{eq:normindptcond2}
 \end{equation}
 This condition will be satisfied if the current form obtained from the corresponding limit vanishes on $M_1$, or more generally on $\mathscr{C}$. The latter is reasonable to demand to make the construction work for any $\Sigma_1,\Sigma_2$.\\
 So far, the construction works for any dimension $d$. We now specialize to $d = 1$. In order to obtain an appropriate condition on $j$, we express $\omega_j$ using relative coordinates
 \begin{align}
 &z = z_1 - z_2,~~~Z = z_1 + z_2,~~~\tau = t_1 - t_2,~~~T = t_1 + t_2\nonumber\\
 \Leftrightarrow~~~&z_1 = \tfrac{1}{2}(Z+z),~~~z_2 = \tfrac{1}{2}(Z-z),~~~t_1 = \tfrac{1}{2}(T+\tau),~~~t_2 = \tfrac{1}{2}(T-\tau).
 \label{eq:relcoords}
\end{align}
This yields:
\begin{align}
 \omega_j ~&=~ \tfrac{1}{2} j^{00} dz \wedge dZ - \tfrac{1}{4}(j^{10}+j^{01}) d \tau \wedge dZ + \tfrac{1}{4}(j^{10}-j^{01}) d\tau \wedge dz \nonumber\\
 &~~~ -\tfrac{1}{4}(j^{10}-j^{01}) d T \wedge dZ - \tfrac{1}{4}(j^{10}+j^{01}) dz \wedge dT + \tfrac{1}{2} j^{11} d \tau \wedge dT.
\label{eq:omegarelcoords}
\end{align}
Now, on $\mathscr{C}$ we have $\tau = 0, z = 0$. Thus, we find:
\begin{equation}
 \omega_j(t,z+(-1)^{i}0,t,z-(-1)^{i}0) ~=~ \tfrac{1}{4} (j^{01}-j^{10})(t,z+(-1)^{i}0,t,z-(-1)^{i}0) \, dT \wedge dZ
 \label{eq:condm12}
\end{equation}
which leads to the following condition for the tensor current:
\begin{equation}
 (j^{01}-j^{10})(t,z+(-1)^{i}0,t,z-(-1)^{i}0) ~\stackrel{!}{=} ~0,~~i = 1,2.
 \label{eq:condj}
\end{equation}
 Recalling $\varepsilon_{\mu\nu} j^{\mu \nu} = j^{01}-j^{10}$, one easily verifies that these are exactly the conditions stated in eq. \eqref{eq:currentconsconds}.
 \item Next, we show that the conditions for the current (which are bilinear in $\psi$) are actually equivalent to the linear relations between the components of $\psi$ stated in eq. \eqref{eq:currentconsbdyconds}. For this purpose, consider
 \begin{align}
   j^{01} - j^{10} ~&=~ \psi^\D(\gamma_1^0 \gamma_2^0 \gamma_1^0 \gamma_2^1 - \gamma_1^0 \gamma_2^0 \gamma_1^1 \gamma_2^0) ~=~ \psi^\D (\id_2 \otimes \sigma_3 - \sigma_3 \otimes \id_2) \nonumber\\
   &\stackrel{\rm eqs.\eqref{eq:pauli},\eqref{eq:tensorrep}}{=}~ (\psi_1^*,\psi_2^*,\psi_3^*,\psi_4^*) \left(\begin{array}{cccc}
 0&~&~&~\\ 
 ~&-2&~&~\\
 ~&~&2&~\\
 ~&~&~&0
 \end{array} \right)\left(\begin{array}{c}
  \psi_1\\ 
  \psi_2\\
  \psi_3\\
  \psi_4
  \end{array} \right)\nonumber\\
 \Leftrightarrow~~~|\psi_2|^2~&=~|\psi_3|^2.
 \end{align}
 This relation is satisfied if and only if there exists a phase function $\theta$ such that $\psi_2 = e^{-i\theta} \psi_3$. Applied to eq. \eqref{eq:condj}, this yields the claim \eqref{eq:currentconsbdyconds}. \qed
 \end{enumerate}

\end{proof}
\begin{remark}
  Note that the strategy used in the proof can be generalized immediately to arbitrary particle numbers and dimensions. Furthermore, it is purely geometrical and therefore leads to Lorentz invariant conditions for the tensor current (if the domain $\Omega$ is Lorentz invariant\footnote{We call a set $A$ Lorentz invariant if for each point $p \in A$ the Lorentz transformed point $p'$ is also contained within the set.}).
\end{remark}

\section{Lorentz invariance}
In this section, we address the issue of Lorentz invariance of the constructions used in this paper. First, we state clearly our understanding of Lorentz invariance. Then we briefly review some basic representation theory of the one-dimensional proper Lorentz group and discuss the invariance of the model. The main result is the proof that the probability-conserving boundary conditions \eqref{eq:currentconsbdyconds} are indeed Lorentz invariant under certain conditions on the phase functions. We also point out a subclass of conditions for which the invariance is manifest.

\subsection{The meaning of Lorentz invariance for the model}
For the model to be Lorentz invariant, we require the following points:
\begin{enumerate}
 \item If a function $\psi$ solves the multi-time wave equations in one frame, it also solves the equations in every other frame. Furthermore, the equations have the same functional form in all frames.
 \item Probability conservation holds in all frames.
 \item In any frame, initial data can be given on $(\Sigma_t \times \Sigma_t)\cap \Omega$ where $\Sigma_t$ is an equal-time hypersurface.
 \item If a function $\psi$ satisfies the boundary conditions in one frame, it satisfies the Lorentz-transformed boundary conditions in every other frame. These boundary conditions have the same functional form in all frames.
\end{enumerate}
Before commenting on these points, we state the transformation properties in question.

\subsection{Representation of the one-dimensional proper Lorentz group} 
In $1+1$ dimensions, the proper Lorentz group $\mathcal{L}_+^\uparrow$ has only one generator, the boost generator in $z$-direction ($x_i = (t_i,z_i)$). For the spinor representation acting on the spin index of the $i$-th particle, this is:
\begin{equation}
 S_i^{01} ~=~ \tfrac{1}{4}[\gamma_i^0,\gamma_i^1].
 \label{eq:generator}
\end{equation}
A two-time wave function transforms as follows under the action of an element $\Lambda \in \mathcal{L}_+^\uparrow$:
\begin{equation}
 \psi(x_1,x_2)~~\stackrel{\Lambda}{\longmapsto}~~ \psi'(x_1,x_2) ~\equiv~ S_1[\Lambda] \, S_2[\Lambda] \, \psi(\Lambda^{-1}x_1, \Lambda^{-1}x_2)
 \label{eq:twotimespinorrep}
\end{equation}
where
\begin{equation}
 S_i[\Lambda]~=~ \exp(\beta S_i^{01}).
 \label{eq:si}
\end{equation}
Here, $\beta$ is a real parameter determined by $\Lambda$.\\
Finally, for later use, note the following relation:
\begin{equation}
 \gamma_i^\mu S_i[\Lambda] ~=~ S_i[\Lambda] \, \Lambda_\nu^\mu \gamma_i^\nu. 
 \label{eq:commutsl}
\end{equation}
The above information is already sufficient to discuss the requirements mentioned in the previous subsection:
\begin{enumerate}
 \item By the standard arguments about the Lorentz invariance of the Dirac equation (see eg. \cite{thaller}), one can show that the multi-time Dirac equations \eqref{eq:reltwotime} indeed transform covariantly. Recalling the argument in the introduction (following eq. \eqref{eq:connection}), we note that in order to discuss Lorentz invariance of the wave equations, it is crucial that a multi-time wave function is considered. Moreover, this consideration also requires the domain $\Omega$ to be Lorentz invariant. The space-like configurations $\mathscr{S}$ are of course such a Lorentz invariant set.\\
 Furthermore, under $\Lambda \in \mathcal{L}_+^\uparrow$, one obtains
 \begin{equation}
  j^{\mu \nu}(x_1,x_2) ~~\stackrel{\Lambda}{\longmapsto} ~~\Lambda_\rho^\mu \, \Lambda_\sigma^\nu \, j^{\rho \sigma}(\Lambda^{-1}x_1,\Lambda^{-1}x_2),
 \label{eq:trafoj}                                                                      
 \end{equation}
 i.e. $j^{\mu \nu}$ transforms similarly to a tensor, the only difference being the arguments in configuration space-time instead of just space-time.
 \item As shown in theorem \ref{thm:currentconsbdyconds}, if probability conservation holds on one space-like hypersurface, it holds on all space-like hypersurfaces. This, of course, includes the equal-time hypersurfaces for all frames.
 \item So far, we assumed the initial data to be given in one particular frame. However, as the choice of this frame is not fixed by any circumstance, one can simply choose the coordinates such that the initial data surface is actually of the desired form.
 \item The Lorentz invariance of the boundary conditions is the most subtle point. Because of the transformation properties of $j$, the conditions on the tensor current are easily seen to be Lorentz invariant (see eq. \eqref{eq:currentconsconds}). However, for the conditions \eqref{eq:currentconsbdyconds} on the components of $\psi$, Lorentz invariance is not manifest and the transformation behavior has to be checked explicitly.
\end{enumerate}

\subsection{Lorentz invariance of the boundary conditions}

\begin{lemma}
 The current-conserving boundary conditions
 \begin{align}
 \psi_2(t,z-0,t,z+0) ~&\stackrel{!}{=}~ e^{-i\theta_1} \, \psi_3(t,z-0,t,z+0),~~t,z \in \R,\nonumber\\
 \psi_2(t,z+0,t,z-0) ~&\stackrel{!}{=}~ e^{-i\theta_2} \, \psi_3(t,z+0,t,z-0),~~t,z \in \R
 \label{eq:libdyconds2}
 \end{align}
 are Lorentz invariant if the functions $\theta_1, \theta_2$ transform as Lorentz scalars, i.e. if
 \begin{equation}
  \theta_i(t,z) ~\stackrel{\Lambda}{\longmapsto}~\theta_i(\Lambda^{-1}(t,z))~\forall \Lambda \in \mathcal{L}_+^\uparrow,~i = 1,2.
  \label{eq:lscalar}
 \end{equation}
 \label{thm:libdyconds}
\end{lemma}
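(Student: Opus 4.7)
The plan is to carry the Lorentz transformation through explicitly in the chosen spin basis $\mathcal{B}$, and check that the two sides of each boundary condition transform in precisely the same way, leaving only the scalar behavior of $\theta_1,\theta_2$ as a condition.

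First I would compute the spinor representation $S_i[\Lambda]$ in the basis where $\gamma^0=\sigma_1$ and $\gamma^1=\sigma_1\sigma_3$. A short calculation gives $[\gamma^0,\gamma^1]=2\sigma_3$, so $S^{01}=\tfrac12\sigma_3$, and therefore $S_i[\Lambda]=\exp(\tfrac{\beta}{2}\sigma_3)=\mathrm{diag}(e^{\beta/2},e^{-\beta/2})$. Taking the Kronecker product and rewriting it in $\mathcal{B}$ via \eqref{eq:tensorrep} yields $S_1[\Lambda]\,S_2[\Lambda]=\mathrm{diag}(e^{\beta},1,1,e^{-\beta})$. The decisive feature is that the middle two entries equal $1$: under the transformation law \eqref{eq:twotimespinorrep}, the components $\psi_2$ and $\psi_3$ pick up no spinor factor, only a pullback of the spacetime argument. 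This is the algebraic miracle that makes the proof go through.

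Next I would verify that the geometric data entering the boundary conditions are respected by $\Lambda$. The coincidence set $\mathscr{C}$ is manifestly Lorentz invariant, and the two connected components $\Omega_1,\Omega_2$ of $\mathscr{S}$ (from lemma \ref{thm:omega12}) are also separately invariant, because for space-like separated events the sign of $z_1-z_2$ cannot be flipped by an element of $\mathcal{L}_+^\uparrow$. Consequently the limits $(t,z\mp 0,t,z\pm 0)$ approach $\mathscr{C}$ from inside the same $\Omega_i$ in the transformed frame, and the two boundary conditions in \eqref{eq:libdyconds2} remain separated by $\Omega_1/\Omega_2$ after the boost.

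Putting these pieces together, if $\psi$ satisfies \eqref{eq:libdyconds2} and $\psi'(x_1,x_2):=S_1[\Lambda]S_2[\Lambda]\psi(\Lambda^{-1}x_1,\Lambda^{-1}x_2)$, then at the coincidence point $(t,z)$
\begin{equation*}
 \psi'_j(t,z-0,t,z+0) \;=\; \psi_j\bigl(\Lambda^{-1}(t,z)-0,\Lambda^{-1}(t,z)+0\bigr), \qquad j=2,3,
\end{equation*}
by the diagonal structure of $S_1[\Lambda]S_2[\Lambda]$. Applying the original condition at the point $\Lambda^{-1}(t,z)\in\mathscr{C}$ gives the relation
\begin{equation*}
 \psi'_2(t,z-0,t,z+0) \;=\; e^{-i\theta_1(\Lambda^{-1}(t,z))}\,\psi'_3(t,z-0,t,z+0).
\end{equation*}
For this to coincide with the boundary condition written in the new frame, one needs exactly $\theta_1(t,z)\stackrel{\Lambda}{\mapsto}\theta_1(\Lambda^{-1}(t,z))$, and the same for $\theta_2$; this is the scalar transformation law \eqref{eq:lscalar}. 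The main (minor) subtlety I anticipate is checking that the labels ``$\pm 0$'' in the two limits are mapped to each other correctly under $\Lambda$, which is handled by the preservation of $\Omega_1$ and $\Omega_2$ noted above; everything else reduces to the explicit spinor calculation.
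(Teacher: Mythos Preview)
Your proposal is correct and follows essentially the same route as the paper: compute $S_1[\Lambda]S_2[\Lambda]$ explicitly in the basis $\mathcal{B}$, observe that the entries acting on $\psi_2,\psi_3$ are trivial, and conclude that these components transform by pure pullback so that only the scalar law \eqref{eq:lscalar} is needed. Your exponential computation $S_1[\Lambda]S_2[\Lambda]=\mathrm{diag}(e^{\beta},1,1,e^{-\beta})$ is in fact a little cleaner than the paper's cosh/sinh expansion, and you additionally make explicit the invariance of $\Omega_1,\Omega_2$ (hence of the ``$\pm 0$'' labels) under $\mathcal{L}_+^\uparrow$, a point the paper leaves implicit.
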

\begin{proof}
 We explicitly determine the transformation properties of the components $\psi_i$. According to eq. \eqref{eq:twotimespinorrep}, we need to calculate the matrices $S_1[\Lambda], S_2[\Lambda]$ via formula \eqref{eq:si}. We have:
 \begin{align}
  S_1[\Lambda]~&=~ \exp(\beta \tfrac{1}{2}\gamma_1^0 \gamma_1^1)~\stackrel{\rm eq. \eqref{eq:gamma1d}}{=}~ \exp(\beta \tfrac{1}{2} \sigma_3 \otimes \id_2)\nonumber\\
              &\stackrel{\rm eq. \eqref{eq:tensorrep}}{=}~ \sum_{k=0}^\infty \frac{(\beta/2)^k}{k !} \left(\begin{array}{cc}
\id_2&~\\
~&(-\id_2)^k
\end{array} \right)\nonumber\\
 &=~ \cosh \beta \cdot \,  \id_4 + \sinh \beta  \left(\begin{array}{cc}
\id_2&~\\
~&-\id_2
\end{array} \right),
  \label{eq:explcitsi1}
 \end{align}
\begin{align}
  S_2[\Lambda]~&=~ \exp(\beta \tfrac{1}{2}\gamma_2^0 \gamma_2^0)~\stackrel{\rm eq. \eqref{eq:gamma1d}}{=}~ \exp(\beta \tfrac{1}{2} \id_2 \otimes \sigma_3)\nonumber\\
              &\stackrel{\rm eq. \eqref{eq:tensorrep}}{=}~ \sum_{k=0}^\infty \frac{(\beta/2)^k}{k !} \left(\begin{array}{cccc}
1&~&~&~\\ 
~&(-1)^k&~&~\\
~&~&(-1)^k&~\\
~&~&~&1
\end{array} \right)\nonumber\\
 &=~ \cosh \beta \cdot \id_4 + \sinh \beta  \left(\begin{array}{cccc}
1&~&~&~\\ 
~&-1&~&~\\
~&~&-1&~\\
~&~&~&1
\end{array} \right).
  \label{eq:explcitsi2}
 \end{align}
It follows that:
\begin{align}
  S_1[\Lambda] S_2[\Lambda]~&=~ \cosh^2 \beta \id_4 + 2 \cosh \beta \sinh \beta \left(\begin{array}{cccc}
1&~&~&~\\ 
~&0&~&~\\
~&~&0&~\\
~&~&~&-1
\end{array} \right) + \sinh^2 \beta \left(\begin{array}{cccc}
1&~&~&~\\ 
~&-1&~&~\\
~&~&-1&~\\
~&~&~&1
\end{array} \right).
  \label{eq:explcitsi3}
 \end{align}
Thus
\begin{align}
  \psi_i(x_1,x_2)~\stackrel{\Lambda}{\longmapsto}~ \psi_i(\Lambda^{-1} x_1, \Lambda^{-1}x_2)~~{\rm for}~i = 2,3.
  \label{eq:explcitsi4}
 \end{align}
Using this transformation property in eq. \eqref{eq:libdyconds2} together with eq. \eqref{eq:lscalar} immediately yields the claim.\qed
\end{proof}

\begin{lemma}
 In case of $e^{-i\theta_k(t,z)} \equiv \pm i$, the boundary conditions \eqref{eq:libdyconds2} can be rewritten in the following manifestly Lorentz invariant form:
  \begin{align}
 \varepsilon_{\mu \nu} \gamma_1^\mu \gamma_2^\nu \psi(t,z-0,t,z+0) ~&\stackrel{!}{=}~ \pm i ( \id_4 + \gamma_1^5 \gamma_2^5)\psi(t,z-0,t,z+0),~~t,z \in \R,\nonumber\\
 \varepsilon_{\mu \nu} \gamma_1^\mu \gamma_2^\nu \psi(t,z+0,t,z-0) ~&\stackrel{!}{=}~ \pm i ( \id_4 + \gamma_1^5 \gamma_2^5)\psi(t,z+0,t,z-0),~~t,z \in \R
  \label{eq:libdyconds3}
 \end{align}
  where
  \begin{equation}
   \gamma_k^5 ~:=~ i \gamma_k^0 \gamma_k^1,~~k = 1,2.
   \label{eq:gamma5}
  \end{equation}
 \label{thm:manifestlibdyconds}
\end{lemma}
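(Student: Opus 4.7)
The plan is to verify the equivalence between the two formulations of the boundary conditions by explicit computation in the basis $\mathcal{B}$, and then to observe that the rewritten form \eqref{eq:libdyconds3} is manifestly Lorentz invariant from the standard covariance properties of the Dirac and $\gamma^5$ matrices.

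First I would compute the two $4\times 4$ matrices appearing in \eqref{eq:libdyconds3} in the basis $\mathcal{B}$. Using $\gamma^5 = i\gamma^0\gamma^1 = i\sigma_3$ (which follows from \eqref{eq:gamma1d}) together with \eqref{eq:tensorrep}, one finds $\gamma_1^5\gamma_2^5 = -\sigma_3\otimes\sigma_3 = \diag(-1,1,1,-1)$, so $\id_4 + \gamma_1^5\gamma_2^5 = \diag(0,2,2,0)$, i.e.\ twice the projector onto the subspace spanned by $e_1\otimes e_2$ and $e_2\otimes e_1$. A similar direct computation of $\varepsilon_{\mu\nu}\gamma_1^\mu\gamma_2^\nu = \gamma_1^0\gamma_2^1 - \gamma_1^1\gamma_2^0$ in the same basis shows that it likewise annihilates $\psi_1$ and $\psi_4$ and has nonzero entries only in positions $(2,3)$ and $(3,2)$. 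Equating the two sides of \eqref{eq:libdyconds3} therefore collapses to a single linear relation between $\psi_2$ and $\psi_3$ of the form $\psi_2 = \pm i\,\psi_3$, with the two nontrivial rows yielding the same relation. Taking the appropriate limits $(t,z\mp 0, t, z\pm 0)$ then reproduces \eqref{eq:libdyconds2} in the special case $e^{-i\theta_k}\equiv\pm i$.

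Next, for the manifest Lorentz invariance I would combine \eqref{eq:commutsl} with the identity $\Lambda^\mu_\alpha\Lambda^\nu_\beta\varepsilon_{\mu\nu} = \det(\Lambda)\varepsilon_{\alpha\beta} = \varepsilon_{\alpha\beta}$ for $\Lambda\in\mathcal{L}_+^\uparrow$, which shows that $\varepsilon_{\mu\nu}\gamma_1^\mu\gamma_2^\nu$ commutes past the spinor transformation $S_1[\Lambda]S_2[\Lambda]$. On the right-hand side, $\gamma^5$ in $1+1$ dimensions is proportional to the unique boost generator $S^{01}=\tfrac12\gamma^0\gamma^1$, so each $\gamma_k^5$ commutes with $S_k[\Lambda]$ and hence $\gamma_1^5\gamma_2^5$ commutes with $S_1[\Lambda]S_2[\Lambda]$. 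Both sides of \eqref{eq:libdyconds3} therefore transform identically under $\Lambda$; together with the Lorentz invariance of $\mathscr{C}$ and of the sets $\Omega_1, \Omega_2$ (which ensures the $\pm 0$ limits transform consistently) this yields the claimed invariance.

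The main obstacle I anticipate is purely bookkeeping: matching the sign $\pm i$ on the right-hand side of \eqref{eq:libdyconds3} with the one in $e^{-i\theta_k}\equiv\pm i$ depends on the chosen convention for $\varepsilon_{\mu\nu}$, and the tensor-product matrix arithmetic has to remain consistent with \eqref{eq:tensorrep}. No conceptual difficulty arises, however, since the entire identity reduces to a $4\times 4$ matrix equation that acts nontrivially only on the two-dimensional subspace spanned by $e_1\otimes e_2$ and $e_2\otimes e_1$.
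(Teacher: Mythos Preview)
Your proposal is correct and follows essentially the same route as the paper: explicit computation of the two $4\times 4$ matrices in the basis $\mathcal{B}$, reduction of the matrix equation to the single relation between $\psi_2$ and $\psi_3$, and identification with \eqref{eq:libdyconds2} for $e^{-i\theta_k}\equiv\pm i$. The only organizational difference is that the paper's proof of the lemma stops at the equivalence of the two formulations; the representation-independent Lorentz-invariance argument you sketch (commuting $S_1[\Lambda]S_2[\Lambda]$ past $\varepsilon_{\mu\nu}\gamma_1^\mu\gamma_2^\nu$ via \eqref{eq:commutsl} and past $\gamma_1^5\gamma_2^5$ via the relation to the boost generator) is placed in a separate Remark immediately following the lemma rather than inside the proof.
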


\begin{proof}
 \begin{align}
  \varepsilon_{\mu \nu} \gamma_1^\mu \gamma_2^\nu ~&=~  \gamma_1^0 \gamma_2^1 - \gamma_1^1 \gamma_2^0 ~=~  \sigma_1 \otimes \id_2 \cdot \id_2 \otimes (\sigma_1 \sigma_3) - (\sigma_1 \sigma_3) \otimes \id_2 \cdot \id_2 \otimes \sigma_1~=~\left(\begin{array}{cccc}
~&~&0&0\\ 
~&~&2&0\\
0&-2&~&~\\
0&0&~&~
\end{array} \right), \nonumber\\
 \id_4 + \gamma_1^5 \gamma_2^5 ~&=~ \id_4 + i\sigma_3 \otimes \id_2 \cdot i \id_2 \otimes \sigma_3~=~  \id_4 - \left(\begin{array}{cccc}
1&~&~&~\\ 
~&-1&~&~\\
~&~&-1&~\\
~&~&~&1
\end{array} \right) ~=~ \left(\begin{array}{cccc}
0&~&~&~\\ 
~&2&~&~\\
~&~&2&~\\
~&~&~&0
\end{array} \right).
  \label{eq:explicitreps}
 \end{align}
Thus, we obtain:
\begin{align}
 \varepsilon_{\mu \nu} \gamma_1^\mu \gamma_2^\nu \psi ~&\stackrel{!}{=}~ \pm i ( \id_4 + \gamma_1^5 \gamma_2^5) \psi\nonumber\\
\Leftrightarrow~~~ \left(\begin{array}{cccc}
~&~&0&0\\ 
~&~&2&0\\
0&-2&~&~\\
0&0&~&~
\end{array} \right) \left(\begin{array}{c}
  \psi_1\\ 
  \psi_2\\
  \psi_3\\
  \psi_4
  \end{array} \right) ~&\stackrel{!}{=}~ \pm i \, \left(\begin{array}{cccc}
0&~&~&~\\ 
~&2&~&~\\
~&~&2&~\\
~&~&~&0
\end{array} \right) \left(\begin{array}{c}
  \psi_1\\ 
  \psi_2\\
  \psi_3\\
  \psi_4
  \end{array} \right).
 \label{eq:equivconds}
\end{align}
This is in turn equivalent to the following conditions
\begin{align}
 0 ~&=~ 0, \nonumber\\
 \psi_3 ~&=~ \pm i \psi_2,\nonumber\\
 - \psi_2 ~&=~ \pm i \psi_3,\nonumber\\
 0 ~&=~ 0.
 \label{eq:equivconds2}
\end{align}
This yields the claim. \qed
\end{proof}

\begin{remark}
 For the manifestly Lorentz invariant boundary conditions \eqref{eq:libdyconds3}, one can use the usual representation-independent strategy to prove Lorentz invariance:\\[0.2cm]
 Assume that the conditions are fulfilled in one frame $F$. Now consider the same conditions in another frame $F'$ that is connected with the former one by a Lorentz transformation $\Lambda$. We have to show that \eqref{eq:libdyconds3} is satisfied as a consequence of the transformation law \eqref{eq:twotimespinorrep} for $\psi$ as well as \eqref{eq:libdyconds3} for $F$. Consider eq. \eqref{eq:libdyconds3} for $F'$:
\begin{align}
  \varepsilon_{\mu \nu} \gamma_1^\mu \gamma_2^\nu \psi'(x_1,x_2) ~&\stackrel{!}{=}~ \pm i ( \id_4 + \gamma_1^5 \gamma_2^5) \psi'(x_1,x_2) \nonumber\\
 \Leftrightarrow~~~\varepsilon_{\mu \nu} \gamma_1^\mu \gamma_2^\nu S_1[\Lambda] S_2[\Lambda] \, \psi(\Lambda^{-1}x_1,\Lambda^{-1}x_2) ~&\stackrel{!}{=}~ \pm i ( \id_4 + \gamma_1^5 \gamma_2^5) S_1[\Lambda] S_2[\Lambda] \, \psi(\Lambda^{-1}x_1,\Lambda^{-1}x_2)
\end{align}
where $x_1 = (t,z\pm 0)$ and $x_2 = (t,x_2\mp 0)$. As $S_1[\Lambda]$ and $S_2[\Lambda]$ are invertible and because $(\Lambda^{-1}x_1,\Lambda^{-1}x_2)$ again has the form $(t',z'\pm 0,t',z'\mp 0)$, it is sufficient to prove that $S_1[\Lambda] S_2[\Lambda]$ commutes with both $\varepsilon_{\mu \nu} \gamma_1^\mu \gamma_2^\nu$ as well as $\id_4 + \gamma_1^5 \gamma_2^5$. Consider first:
\begin{align}
 \varepsilon_{\mu \nu} \,\gamma_1^\mu \gamma_2^\nu \, S_1[\Lambda] S_2[\Lambda] ~\stackrel{\rm eq.\eqref{eq:commutsl}}{=}~ S_1[\Lambda]  S_2[\Lambda]\, \varepsilon_{\mu \nu} \, \Lambda_\rho^\mu \Lambda_\sigma^\nu \,  \gamma_1^\rho  \gamma_2^\sigma =~ S_1[\Lambda]  S_2[\Lambda]\, \varepsilon_{\rho \sigma} \,  \gamma_1^\rho  \gamma_2^\sigma
\end{align}
where in the equality we used $\det(\Lambda) \varepsilon_{\rho \sigma} = \varepsilon_{\mu \nu} \, \Lambda_\rho^\mu \Lambda_\sigma^\nu$ as well as $\det(\Lambda) = 1$ for $\Lambda \in \mathcal{L}_+^\uparrow$.\\
In order to show that $S_1[\Lambda] S_2[\Lambda]$ commutes with $\id_4 + \gamma_1^5 \gamma_2^5$ note that it is sufficient that the generators $S_k^{01},~k = 1,2$ commute with $\gamma_1^5 \gamma_2^5$. We have: $S_k^{01} \gamma_j^5 = \gamma_j^5 S_k^{01},~j,k = 1,2$. For $j \neq k$, this is obvious and in case $j = k$ the equation easily follows from $S_k^{01} = \tfrac{1}{2} \gamma_k^0 \gamma_k^1$ (see eq. \eqref{eq:generator}) as well as $\gamma_k^5 = i \gamma_k^0 \gamma_k^1$ (see eq. \eqref{eq:gamma5}).
\end{remark}

\section{Interaction}
In this section, we analyze the physical meaning of the boundary conditions \eqref{eq:libdyconds2} and in particular the question if they lead to interaction. In order to address this question appropriately, we suggest a simple and clear-cut notion of interaction. Then we use the explicit solution of our model to determine the time evolution of certain wave packets for which interaction effects are clearly visible. With this result we gain physical insight into the detailed nature of the time evolution implied by our model. Moreover, we can use the result to conclude that it indeed leads to interaction.

\paragraph{A criterion for interaction:} Most often, ``interaction'' in quantum mechanics is simply defined by the presence of an interaction potential in the Hamiltonian. This notion of interaction is obviously not adequate for models such as ours where one aims at implementing interaction effects via boundary conditions. A more general criterion is needed:\\
A quantum-mechanical model is called \textit{free} if every initial product wave function (in the particle coordinates and spin indices) remains a product wave function during time evolution. It is called \textit{interacting} if there exist initial product wave functions that do not stay product wave functions during time evolution.

\paragraph{Evolution of an initially well-localized product wave function:} 
A product wave function $\psi = \phi \otimes \chi$, where $\phi, \chi : \R^2 \rightarrow \C^2$ are two-component spinors, has the following components:
\begin{equation}
 \psi_1 = \phi_1 \chi_1,~~~\psi_2 = \phi_1 \chi_2,~~~\psi_3 = \phi_2 \chi_1,~~~\psi_3 = \phi_2 \chi_2.
 \label{eq:productfncomponents}
\end{equation}
In order to make the example as simple as possible, we choose the initial wave function $\psi(0,z_1,0,z_2) = \phi(z_1) \otimes \chi(z_2)$ as follows:
\begin{equation}
 \phi_2 \equiv \chi_1 \equiv 0,~~~\phi_1 = \tilde{\phi}\, 1_{[a,b]},~~~\chi_2 = \tilde{\chi}\, 1_{[c,d]}
 \label{eq:initialprod}
\end{equation}
where $\tilde{\phi}, \tilde{\chi} : \R \rightarrow \C$ are smooth functions with support in $[a,b]$ and $[c,d]$, respectively. We choose $a < b < c < d$. $1_{[x,y]}$ denotes the characteristic function of the interval $[x,y]$, i.e. $1_{[x,y]}(z) = 1$ if $z \in [x,y]$ and $0$ otherwise. We have multiplied $\tilde{\phi}, \tilde{\chi}$ with the characteristic functions of their support to make more explicit when they vanish.\\
As a consequence of eq. \eqref{eq:initialprod}, we have:
\begin{align}
 &\psi_1(0,z_1,0,z_2) ~=~ \psi_3(0,z_1,0,z_2) ~=~ \psi_4(0,z_1,0,z_2) ~=~ 0,  \nonumber\\
 &\psi_2(0,z_1,0,z_2) ~=~ \tilde{\phi}(z_1) \, \tilde{\chi}(z_2)\,  1_{[a,b]}(z_1) \, 1_{[c,d]}(z_2). \label{eq:initialprod3}
\end{align}
Specifically, we note that $\psi_2(0,z_1,0,z_2) = 0~{\rm for}~ z_2 \geq z_1$. Therefore, $\psi$ satisfies the boundary conditions \eqref{eq:libdyconds2} in the form $0 = 0$.\\
\begin{lemma}
The solution of the IBVP defined by eqs. \eqref{eq:initialprod3}, \eqref{eq:currentconsbdyconds} is given by:
\begin{align}
 &\psi ~\equiv~ 0~~~{\rm on}~ \Omega_2,\nonumber\\
 &\psi_1 ~\equiv~ \psi_4 ~\equiv~ 0~~~{\rm on}~ \Omega, \nonumber\\
 &\psi_2(t_1,z_1,t_2,z_2)~=~\tilde{\phi}(z_1-t_1) \tilde{\chi}(z_2+t_2) \, 1_{[a,b] + t_1}(z_1) \,  1_{[c,d]-t_2}(z_2)\nonumber\\
&~~~~~~~~~~~~~~~~~~~~~~~~~\times \Theta(-z_1+t_1+z_2+t_2)~~~{\rm on}~ \Omega_1, \nonumber\\
 &\psi_3(t_1,z_1,t_2,z_2)~=~e^{i\theta_1((z_1-z_2+t_1+t_2)/2, (z_1+z_2+t_1-t_2)/2)}\tilde{\phi}(z_2-t_2) \tilde{\chi}(z_1+t_1)\nonumber\\
 &~~~~~~~~~~~~~~~~~~~~~~~~~\times 1_{[a,b] + t_2}(z_2) \,  1_{[c,d]-t_1}(z_1) \, \Theta(z_1+t_1-z_2+t_2)~~~{\rm on}~ \Omega_1
 \label{eq:timevolexample}
\end{align}
 where $\Theta$ is the Heaviside function.
\end{lemma}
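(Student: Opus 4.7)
The plan is a direct verification using the explicit solution formula \eqref{eq:explicitsolution} from Theorem \ref{thm:ibvp}, with the boundary data $h_1^\pm, h_2^\pm$ extracted from the probability-conserving boundary conditions \eqref{eq:currentconsbdyconds}. Since \eqref{eq:explicitsolution} already gives uniqueness, all that remains is to identify the correct $g_i^{(j)}, h_j^\pm$ for our specific initial data and compute.

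First I would handle $\Omega_2$. The initial data on $\mathcal{I}_2 = \mathcal{I} \cap \Omega_2$ require $z_1 > z_2$, but \eqref{eq:initialprod3} has support in $z_1 \in [a,b]$, $z_2 \in [c,d]$ with $b < c$, so $g_i^{(2)} \equiv 0$. The boundary condition \eqref{eq:currentconsbdyconds} on the $\Omega_2$-side is then trivially satisfied by $h_2^\pm \equiv 0$, hence, by uniqueness, $\psi \equiv 0$ on $\Omega_2$.

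Next, on $\Omega_1$, I would read off $g_1^{(1)} = g_3^{(1)} = g_4^{(1)} = 0$ and $g_2^{(1)}(z_1,z_2) = \tilde\phi(z_1)\tilde\chi(z_2) 1_{[a,b]}(z_1) 1_{[c,d]}(z_2)$. From the first line of \eqref{eq:explicitsolution} this immediately gives $\psi_1 \equiv \psi_4 \equiv 0$ on $\Omega_1$. The remaining work is to determine $h_1^\pm$ from \eqref{eq:currentconsbdyconds}. Writing the general solution as $\psi_2 = f_2(z_1-t_1,z_2+t_2)$ and $\psi_3 = f_3(z_1+t_1,z_2-t_2)$, the boundary condition at $(t,z-0,t,z+0)$ reads $f_2(z-t,z+t) = e^{-i\theta_1(t,z)} f_3(z+t,z-t)$. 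For $t \geq 0$, the point $(z-t,z+t)$ lies in the initial-data region for $f_2$, so $h_1^+(t,z) = e^{i\theta_1(t,z)}\, g_2^{(1)}(z-t,z+t)$. For $t < 0$, the point $(z+t,z-t)$ lies in the initial-data region for $f_3$, which is zero, so $h_1^-(t,z) \equiv 0$.

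Plugging these $h_1^\pm$ into the piecewise formula for $\psi_2, \psi_3$ in \eqref{eq:explicitsolution} and simplifying yields the stated expressions. In particular the change of variables $t = (x-y)/2$, $z = (x+y)/2$ turns the arguments $((z_1-z_2+t_1+t_2)/2,(z_1+z_2+t_1-t_2)/2)$ of $h_1^+$ into $(z_2-t_2, z_1+t_1)$ inside the support indicators, while the Heaviside factors $\Theta(\pm z_1 + t_1 + z_2 \pm t_2)$ encode the case distinctions $z_1-t_1 \lessgtr z_2+t_2$ and $z_1+t_1 \lessgtr z_2-t_2$ appearing in \eqref{eq:explicitsolution}. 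There is no genuine obstacle here; the only bookkeeping issue is keeping the signs and the two regions of the piecewise formula straight, and checking that the compatibility condition \eqref{eq:compatibvp} is automatic because the supports $[a,b]$ and $[c,d]$ are disjoint (so all relevant restrictions to $z_1=z_2$ vanish).
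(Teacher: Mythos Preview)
Your proposal is correct and follows essentially the same approach as the paper: both invoke the explicit solution formula \eqref{eq:explicitsolution}, read off the $h_1^\pm$ from the boundary condition \eqref{eq:currentconsbdyconds} by noting that for $t\geq 0$ the value $\psi_2(t,z-0,t,z+0)=g_2^{(1)}(z-t,z+t)$ is already fixed by initial data (giving $h_1^+=e^{i\theta_1}g_2^{(1)}(z-t,z+t)$) while for $t<0$ the corresponding $\psi_3$-value vanishes (giving $h_1^-\equiv 0$), and then substitute. Your treatment of $\Omega_2$ and of the compatibility condition \eqref{eq:compatibvp} is slightly more explicit than the paper's, but the argument is the same.
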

\begin{remark}
 Note that one can leave away the $\Theta$-functions from the equations for $\psi_2$, $\psi_3$ as they only set the functions to zero where they vanishes anyway.
\end{remark}
\begin{proof}
 We make use of the explicit solution \eqref{eq:explicitsolution}, the only difference being that the functions $h_1^{\pm}$ are given by $\psi_2,\psi_3$ via the boundary conditions \eqref{eq:currentconsbdyconds}. Evidently, $\psi \equiv 0$ on $\Omega_2$, so we can focus on $\Omega_1$.
 \begin{enumerate}
  \item For $z_1 + t_1 \geq z_2-t_2$, $\psi_2$ is determined by initial data and we have:
  \begin{align}
   \psi_3(t_1,z_2,t_2)~&=~ h_1^+(\underbrace{\tfrac{z_1-z_2+t_1+t_2}{2}}_{t},\underbrace{\tfrac{z_1+z_2+t_1-t_2}{2}}_{z} )\nonumber\\
   &=~ e^{i\theta_1(t,z)} \psi_2(t,z-0,t,z+0) ~=~ e^{i\theta_1(t,z)} g_2^{(1)}(z-t,t+z)\nonumber\\
   &=~ e^{i\theta_1((z_1-z_2+t_1+t_2)/2,(z_1+z_2+t_1-t_2)/2)} g_2^{(1)}(z_2-t_2,z_1+t_1).
   \label{eq:timevolderiv1}
  \end{align}
  \item Similarly, for $z_1-t_1\geq z_2+t_2$, $\psi_3$ is determined by initial data and we obtain:
  \begin{align}
   \psi_2(t_1,z_2,t_2)~&=~ h_1^-(\underbrace{\tfrac{-z_1+z_2+t_1+t_2}{2}}_{t'},\underbrace{\tfrac{z_1+z_2-t_1+t_2}{2}}_{z'} )\nonumber\\
   &=~ e^{i\theta_1(t',z')} \psi_2(t',z'-0,t',z'+0) ~=~ e^{i\theta_1(t',z')} g_2^{(1)}(z'-t',t'+z')\nonumber\\
   &=~ e^{-i\theta_1((-z_1+z_2+t_1+t_2)/2,(z_1+z_2-t_1+t_2)/2)} g_3^{(1)}(z_2+t_2,z_1-t_1).
   \label{eq:timevolderiv2}
  \end{align}
 \end{enumerate}
Noting that the two cases are exclusive on $\Omega_1$, we have determined $\psi$ uniquely (see eq. \eqref{eq:explicitsolution}). Reading off the initial data from eq. \eqref{eq:initialprod3} and using the explicit solution \eqref{eq:explicitsolution}, the claim follows. \qed
\end{proof}
Let us come back to the interaction criterion. If $\psi$ were a product function for all times, we would have to be able to factorize it analogously to eq. \eqref{eq:initialprod}. However, we can see from eq. \eqref{eq:timevolexample} that this is not possible: the phase function in the expression for $\psi_3$ in general contains the variables $t_1,z_1,t_2,z_2$ in an inextricable way. Even if $\theta_1$ were to decompose into a sum of functions of $t_1,z_1$ and $t_2,z_2$, respectively, a slightly modified example with non-vanishing initial $\psi_3$-component would show an interaction effect. Then the contribution to $\psi_3$ in eq. \eqref{eq:timevolexample} would appear additively and thus produce entanglement.\\
 We have therefore found an example for an initial product wave function which becomes entangled with time and obtain the following result:
\begin{theorem}
Our model, defined by eqs. \eqref{eq:ibvp1}, \eqref{eq:ibvp2} and \eqref{eq:libdyconds2}, is interacting in the sense of the criterion presented above.
 \label{thm:interaction}
\end{theorem}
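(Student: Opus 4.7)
The plan is to show the theorem by exhibiting an initial product wave function whose evolution, uniquely determined by Theorem~\ref{thm:ibvp}, fails to be a product. The preceding lemma already computes this evolution explicitly for the localized datum \eqref{eq:initialprod}, so the task essentially reduces to verifying that the expression \eqref{eq:timevolexample} cannot be factorized as $\Phi(t_1,z_1)\otimes X(t_2,z_2)$ in $\C^2\otimes\C^2$.

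First I would verify admissibility of the witness: the datum \eqref{eq:initialprod} is a product by construction, and since $\psi_2$ vanishes on $\{z_1=z_2\}$ while the remaining components vanish identically, the boundary conditions \eqref{eq:currentconsbdyconds} are satisfied trivially in the form $0=0$. Theorem~\ref{thm:ibvp} and the preceding lemma then identify the unique evolution as \eqref{eq:timevolexample}. I would then argue by contradiction: suppose the evolved $\psi$ is still a product, $\Phi(t_1,z_1)\otimes X(t_2,z_2)$. Then on the region where both $\psi_2$ and $\psi_3$ are nonzero (nonempty once $t_1+t_2$ is large enough that the shifted supports $[a,b]+t_1$ and $[c,d]-t_2$ overlap across the coincidence line), the ratio $\psi_3/\psi_2$ must separate as a function of $(t_1,z_1)$ times a function of $(t_2,z_2)$. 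Substituting \eqref{eq:timevolexample} and cancelling the smooth amplitudes and the support indicators which already separate in the coordinates, this forces the phase $\theta_1$, evaluated at the mixed arguments $u=(z_1-z_2+t_1+t_2)/2$ and $v=(z_1+z_2+t_1-t_2)/2$, to decompose additively modulo $2\pi$ as $\alpha(t_1,z_1)+\beta(t_2,z_2)$. Since $(u,v)$ is an invertible linear function of the four coordinates, this is a genuine constraint on $\theta_1$ that fails for a generic choice, yielding the contradiction.

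To handle the degenerate case where $\theta_1$ does happen to be of the above separable form (e.g.\ $\theta_1\equiv\mathrm{const}$), I would modify the initial datum by adding a small product contribution supported in $\Omega_1$ with a nontrivial $\psi_3^{(1)}$-component, chosen so that the compatibility conditions \eqref{eq:compatibvp} remain intact. By the linearity of the representation \eqref{eq:explicitsolution} in the initial and boundary data, the perturbed $\psi_3$ becomes the \emph{sum} of a boundary-transported piece depending on $(u,v)$ and an initial-data piece depending on $(z_1+t_1,z_2-t_2)$. These two summands depend on different coordinate combinations, so even after factoring out common support indicators the sum cannot be written as $A(t_1,z_1)B(t_2,z_2)$, independently of $\theta_1$.

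The main obstacle I anticipate is precisely this second case: one must arrange the perturbation so that the two additive contributions to $\psi_3$ overlap on a set of positive measure, so that they interfere in a way that cannot be undone by any product structure, and so that the compatibility \eqref{eq:compatibvp} between initial and boundary data is maintained. Once this is in place, the chosen initial product wave function evolves to an entangled state, and the model meets the interaction criterion.
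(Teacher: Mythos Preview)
Your approach is essentially the paper's own: use the product datum \eqref{eq:initialprod}, read off the evolution from the preceding lemma, argue that the phase $e^{i\theta_1}$ obstructs factorization for generic $\theta_1$, and in the separable case perturb with a nontrivial initial $\psi_3$-component so that the two additive contributions to $\psi_3$ cannot combine into a product. This matches the paper almost verbatim, including the two-case split.

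One slip worth fixing: the map $(t_1,z_1,t_2,z_2)\mapsto(u,v)$ goes from $\R^4$ to $\R^2$, so it is certainly not invertible. In fact $u+v=z_1+t_1$ and $v-u=z_2-t_2$, so $\theta_1(u,v)$ depends on the four coordinates only through the pair $(z_1+t_1,\,z_2-t_2)$. The factorization constraint you want therefore reduces to $\theta_1$ being additively separable \emph{in its own two arguments}, which still fails for a generic $\theta_1$; your conclusion stands, only the justification needs this correction.
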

\begin{remark}
 \begin{enumerate}
  \item One can see from the derivation of eq. \eqref{eq:timevolexample} that a similar interaction effect is present for all initial wave functions with $\psi_2 \neq 0$.
  \item Example \eqref{eq:timevolexample} allows us to understand the interaction in a more detailed way (see fig. \ref{fig:interaction}). Focusing on times $t_1 = t_2 = t$, we see that at $t = 0$ each particle has an associated wave packet localized in a certain region. These regions do not overlap. For $t > 0$, the wave packets are translated towards each other with speed $c = 1$, so that the gap between them shrinks with speed 2. There is no dispersion in the mass-less case. Of course, these wave packets are not actually functions on physical space but on different copies thereof, as factors of configuration space. As soon as they would meet, a scattering process happens: the wave packets swap place, i.e. the one associated with particle 1 becomes associated with particle 2 and the other way around. This process is associated with a phase. They move in opposite directions as before -- and the corresponding contribution to the wave function is one associated with different spin. In summary, the interaction 
has range zero, respects retardation and produces a scattering from one spin component into another.
 \item Due to this behavior, the model only describes scattering processes. ``Bound states'' or ``resonances'' do not appear. To define these concepts rigorously, one can use the single-time model obtained by restricting the multi-time wave function to a common time.
 \end{enumerate}
\end{remark}

\begin{figure}[p]
   \centering
   \includegraphics[width=\textwidth]{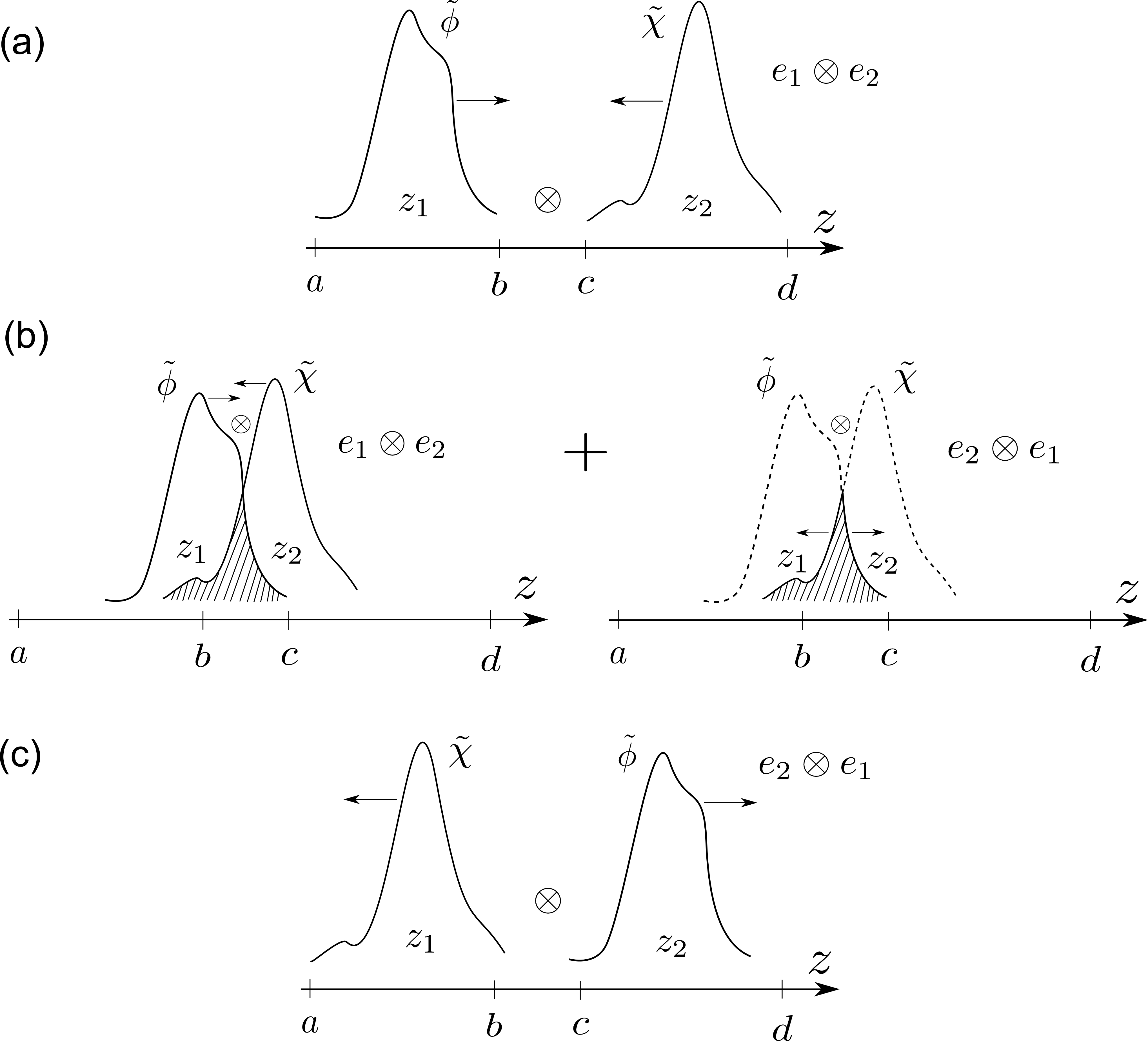}
   \caption{Schematic illustration of the interaction.
   (a) $t_1 = t_2 = 0$: The wave packets move towards each other (in different parts of configuration space) with speed $c = 1$ and without dispersion. $\tilde{\varphi}$ is associated with particle 1 and $\tilde{\chi}$ with particle 2. The only non-zero component of the total wave function is $\psi_2$ (associated with $e_1 \otimes e_2$).
   (b) $(d-a)/2 > t_1 = t_2 > (c-b)/2$: The wave packets overlap (if plotted in the same space). The hatched area for $e_1 \otimes e_2$ has left $\Omega_1$. It reappears with a phase in the component for $e_2 \otimes e_1$. The wave packets have swapped place. The hatched parts of $\tilde{\varphi}$ are now associated with particle 2 and the hatched parts of $\tilde{\chi}$ with particle 1.
   (c) $t_1 = t_2 > (d-a)/2$: The wave packets have passed each other. The only non-zero component is $\psi_3$ (associated with $e_2 \otimes e_1$). }
 \label{fig:interaction}
 \end{figure}

\section{Indistinguishable particles}
If we describe indistinguishable fermions, the correct transformation properties of our wave function are\footnote{Eq. \eqref{eq:antisymmproperty} is a straightforward generalization of the well-known antisymmetry properties of a single-time wave function, i.e. $\varphi^{s_1 s_2}(\mathbf{x}_1, \mathbf{x}_2,t) =- \varphi^{s_2 s_1}(\mathbf{x}_2, \mathbf{x}_1,t)$.}
\begin{equation}
 \psi^{s_1 s_2}(x_1,x_2)~=~ -\psi^{s_2s_1}(x_2,x_1).
 \label{eq:antisymmproperty}
\end{equation}
Here, the double index $s_1 s_2$ where $s_1,s_2 = -1,1$ is a different way of denoting the spin components. We have:
\begin{equation}
 \psi^{-1 -1} ~=~ \psi_1,~~~\psi^{-1 1} ~=~ \psi_2,~~~\psi^{1 -1} ~=~ \psi_3,~~~\psi^{1 1} ~=~ \psi_4.
 \label{eq:spincomponents}
\end{equation}
One may ask: is our model compatible with these transformation rules? In order to answer this question, we note that given a solution of the IBVP with boundary conditions \eqref{eq:currentconsbdyconds} on $\Omega$, we may use eq. \eqref{eq:antisymmproperty} to continue it antisymmetrically to $\Omega_2$. It is easy to see that as a consequence of the two-time Dirac equation \eqref{eq:twotime} on $\Omega_1$ it also solves the two-time Dirac equation on $\Omega_2$, with initial data that are the antisymmetric continuation of those on $\Omega_1$. However, under which circumstances are the boundary conditions on $\Omega_2$ satisfied?\\
To answer this question, we consider the transformation behavior of the boundary conditions \eqref{eq:currentconsbdyconds} on $\Omega_1$:
\begin{align}
 \psi_2(t,z-0,t,z+0) ~&\stackrel{!}{=}~ e^{-i \theta_1(t,z)} \psi_3(t,z-0,t,z+0),~~t,z \in \R\nonumber\\
 \stackrel{\rm antisym.}{\longmapsto}~~~-\psi_3(t,z+0,t,z-0) ~&\stackrel{!}{=}~ -e^{-i \theta_1(t,z)} \psi_2(t,z+0,t,z-0),~~t,z \in \R\nonumber\\
 \Leftrightarrow~~~\psi_2(t,z+0,t,z-0) ~&\stackrel{!}{=}~ e^{+i \theta_1(t,z)} \psi_2(t,z+0,t,z-0),~~t,z \in \R\nonumber
  \label{eq:bdyantisymtrafo}
\end{align}
which is a boundary condition on $\Omega_2$. Comparison with eq. \eqref{eq:currentconsbdyconds} yields the following result:
\begin{lemma}
 The IBVP defined by eqs. \eqref{eq:ibvp1}, \eqref{eq:ibvp2} and \eqref{eq:currentconsbdyconds} is compatible with antisymmetry under particle exchange \eqref{eq:antisymmproperty} if the initial data are antisymmetric and if $\theta_2 = -\theta_1$.
 \label{thm:antisymmetry}
\end{lemma}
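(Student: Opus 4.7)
The plan is to take any solution $\psi$ of the IBVP on $\Omega_1$ and define a candidate solution on $\Omega_2$ by the antisymmetric continuation $\widetilde\psi^{s_1 s_2}(x_1,x_2) := -\psi^{s_2 s_1}(x_2,x_1)$, then check in turn that (i) $\widetilde\psi$ solves the two-time Dirac system on $\Omega_2$, (ii) it matches the antisymmetric initial data on $\mathcal{I}_2$, and (iii) it satisfies a boundary condition of the form \eqref{eq:currentconsbdyconds} for some phase function $\theta_2$. Uniqueness from theorem \ref{thm:ibvp} will then force $\widetilde\psi$ to coincide with the prescribed $\Omega_2$-solution, and comparison of the two boundary conditions will pin down $\theta_2$.

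For (i), the two equations in \eqref{eq:twotime} have the same functional form, differing only by which particle index they act on, so the simultaneous swap of the spatio-temporal and spin arguments of the two particles maps solutions to solutions; the overall sign in $\widetilde\psi$ is immaterial by linearity. Step (ii) is immediate from the hypothesis that the initial data on $\mathcal{I}$ are antisymmetric under exchange. Both checks are essentially bookkeeping.

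The main step is (iii). Translating \eqref{eq:antisymmproperty} together with \eqref{eq:spincomponents} into component language shows that particle exchange sends $\psi_2 \mapsto -\psi_3$, $\psi_3 \mapsto -\psi_2$, and in addition swaps the spatial arguments $z_1 \leftrightarrow z_2$, which in turn interchanges the one-sided limits $z-0 \leftrightarrow z+0$. Applied to the first line of \eqref{eq:currentconsbdyconds}, namely $\psi_2(t,z-0,t,z+0) = e^{-i\theta_1(t,z)}\,\psi_3(t,z-0,t,z+0)$, this yields
\begin{equation*}
-\psi_3(t,z+0,t,z-0) \;=\; -e^{-i\theta_1(t,z)}\,\psi_2(t,z+0,t,z-0),
\end{equation*}
equivalently $\psi_2(t,z+0,t,z-0) = e^{+i\theta_1(t,z)}\,\psi_3(t,z+0,t,z-0)$. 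Matching this against the prescribed $\Omega_2$-condition $\psi_2(t,z+0,t,z-0) = e^{-i\theta_2(t,z)}\,\psi_3(t,z+0,t,z-0)$ forces $e^{-i\theta_2(t,z)} = e^{+i\theta_1(t,z)}$ for all $(t,z)$, i.e.\ $\theta_2 \equiv -\theta_1$ (mod $2\pi$). Reversing roles (starting from the second line of \eqref{eq:currentconsbdyconds} on $\Omega_2$) gives the same relation, so the condition is necessary as well as sufficient.

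I do not expect a serious obstacle here: once the particle-exchange symmetry of the free equations is noted and uniqueness from theorem \ref{thm:ibvp} is invoked, the argument is essentially algebraic. The only place to be careful is the bookkeeping of one-sided limits: because particle exchange swaps $z_1$ and $z_2$, the limit $z_1 \to z-0$, $z_2 \to z+0$ (approach from $\Omega_1$) is mapped to $z_1 \to z+0$, $z_2 \to z-0$ (approach from $\Omega_2$), and it is precisely this flip that is responsible for the sign reversal in the phase.
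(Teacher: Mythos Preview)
Your proposal is correct and follows essentially the same route as the paper: one antisymmetrically continues the $\Omega_1$-solution to $\Omega_2$, notes that it still solves the multi-time Dirac system with the antisymmetric initial data, and then transforms the $\Omega_1$-boundary condition under particle exchange to read off $e^{-i\theta_2}=e^{+i\theta_1}$. The only minor addition on your side is the explicit appeal to the uniqueness part of theorem~\ref{thm:ibvp} to identify the continuation with the prescribed $\Omega_2$-solution, which the paper leaves implicit.
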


\section{Discussion and outlook}
Starting from natural physical considerations, we were led to the view that relativistic quantum mechanics should build around the notion of a multi-time wave function. To consider a multi-time wave function seems necessary to discuss the Lorentz invariance of the theory. This viewpoint raised several mathematical questions: The evolution equations are in general overdetermined and the corresponding consistency condition excludes interaction potentials. This motivated the search for different mechanisms of interaction. In this paper, we studied the possibility of interactions by boundary conditions on configuration space-time. The natural domain in such a view are the space-like configurations. We chose the simplest model possible that still possesses essential properties like Lorentz invariant wave equations and a conserved current with positive density: mass-less Dirac equations for two particles in one spatial dimension. The study of existence and uniqueness of solutions for multi-time wave equations on 
domains with boundary, however, required to go beyond the usual functional-analytic setting of a unitary group (or a multi-time version thereof). The relative simplicity of our model allowed for an alternative strategy to prove the existence and uniqueness of solutions: a generalized version of the method of characteristics. After finding an appropriate notion of relativistic probability conservation, we employed a geometrical construction involving Stokes' theorem to extract a class of boundary conditions guaranteeing this property. In addition, we proved that the theory with said class of boundary conditions is Lorentz invariant in a strict sense. Furthermore, we showed that the model is interacting in an appropriate sense. The details of the interaction were studied at the example of an initially well-localized wave packet for each of the two particles. It was found that the model describes a scattering process with range zero and associated with a spin flip. Finally, we analyzed the requirements of 
antisymmetry for indistinguishable particles and showed that they lead to a further selection of the class of physically sensible boundary conditions.\\
We emphasize that the model has been constructed in a way which is consistent with realistic relativistic quantum theories such as the hypersurface Bohm-Dirac models \cite{hbd} and relativistic GRW theories \cite{grwf,rel_grw}. The main requirement of these theories is a multi-time formulation with a conserved tensor current like in eq. \eqref{eq:j}.\\
The reader may also be interested in a comparison with the more familiar single-time picture. This is always possible by restricting the multi-time wave function to a single global time via eq. \eqref{eq:connection}. Then the multi-time model \eqref{eq:model} yields a single-time model (using the chain rule to obtain a single wave equation from the multi-time wave equations). This model can be analyzed on its own terms via a functional-analytic approach, using the methods of zero-range physics. This changes the notion of a solution from ``classical'' to ``weak''. Nevertheless, we expect such an approach to lead to results for the subclass of classical solutions similar to the results for the solutions in our model when evaluated at equal times. For the functional-analytic approach, one expects that the phase functions in eq. \eqref{eq:currentconsbdyconds} should not depend on $t$ -- and therefore, by Lorentz invariance, not on $z$, either. Unsurprisingly, our approach is slightly more general in this regard, 
as it is specifically designed for an equal treatment of space and time variables (see the introduction).\\
In view of the success of the methods for this very simple model, it is natural to ask for possible generalizations with respect to several aspects:
\begin{itemize}
 \item \textbf{Non-zero masses:} The study of the mass-less case is mainly a technical simplification. The conservation properties of the tensor current as well as the derivation of the class of probability-conserving boundary conditions are independent of the presence of mass terms in the multi-time equations. However, the strategy of the existence and uniqueness proof was to make use of the fact that the solution has to be constant along the multi-time characteristics. This is not true anymore for the case with mass. Preliminary investigations have led us to the idea that it might be possible to reformulate the simultaneous differential systems of multi-time equations into a single system of multi-time integral equations. For these integral equations, fixed point arguments could be used to prove the existence and uniqueness of a solution.
 \item \textbf{$N>2$ particles:} The generalization for $N$ particles should be straightforward and is currently under investigation. Many of the constructions in this paper such as the domain, the multi-time equations and the current form are immediately extendible to the multi-particle case. The idea is to prescribe boundary conditions on the set where a pair of particles is at the same space-time point and the other particles space-like to this point. Compared with the two-particle case, this raises additional questions if these boundary conditions do not overdetermine the problem. Also, in the study of existence and uniqueness of solutions, it is now possible that components of the wave function are determined via successive collisions, meaning that one component is determined by initial data, determines another one via boundary conditions and this other one determines yet another via other boundary conditions and so on.
 \item \textbf{Higher dimensions:} An immediate generalization of the model to $d>1$ is not feasible. One can see this from the following consideration: the boundary conditions as the mechanism of interaction are derived from the fact that the integral $\int_{\mathscr{C}} \omega_j$ has to vanish to ensure probability conservation. However, $\omega_j$ is in general a $2d$-form and $\mathscr{C} = \{(x_1,x_2) \in \R^{1+d}\times \R^{1+d} : x_1 = x_2\}$ is $(1+d)$-dimensional. Thus, $\mathscr{C}$ is a zero-measure set for $d > 1$ and the integral vanishes without boundary conditions. In fact, one can even show that $\int_{(\Sigma \times \Sigma)\cap \Omega} \omega_j$ is a so-called ``energy integral''\footnote{This means that if $\int_{(\Sigma \times \Sigma)\cap \Omega} \omega_j$ with $\omega_j$ derived from $\psi_1 - \psi_2$ according to eq. \eqref{eq:defcurrentform} is zero, one can conclude that $\psi_1 \equiv \psi_2$ on $(\Sigma \times \Sigma)\cap \Omega$ in a suitable (weak) sense.} for the multi-time 
equations and that therefore probability conservation guarantees uniqueness of solutions. Since this integral is automatically conserved for $d > 1$, no boundary conditions are required from a mathematical perspective. Prescribing boundary conditions in spite of this would either influence the wave function only on a low-dimensional set or lead to (possibly complicated) restrictions on the initial data. Without a clear physical reason for conditions of this kind, this option does not seem sensible. We therefore conclude that our construction has to be modified in order to produce interaction effects for $d > 1$.
 \item \textbf{Different domains:} Motivated by the fact that $\int_{\mathscr{C}} \omega_j$, the flux through the boundary, vanishes for $d > 1$, one can try to find a different Lorentz invariant domain that yields a non-vanishing flux through the boundary, i.e. $\int_{\mathscr{B}} \omega_j$ where $\mathscr{B}$ has a dimension of at least $2d$. Such a domain is for example given by the space-like configurations with a minimum space-like distance $\alpha$:
 \[ \Omega_\alpha = \{ (x_1,x_2) \in \R^{1+d}\times \R^{1+d} : (x_1^0-x_2^0)^2 - (\mathbf{x}_1- \mathbf{x}_2)^2 < -\alpha^2\}.\]
 For the one-dimensional case one could proceed similarly as before, just with a more complicated geometry. The main question is if the boundary conditions obtained from demanding $\int_{\mathscr{B}} \omega_j \stackrel{!}{=} 0$ do not overdetermine the problem. For $d>1$ one would have to devise a new strategy to analyze the question of the existence of a solution, as the previous method is based on the possibility to simultaneously diagonalize all the matrices in the multi-time equations (at least those in the highest order terms). This is, of course, not possible for the Dirac equation for $d >1$. However, on the physical side this idea is not fully satisfactory because one introduces an additional constant $\alpha$ without an apparent deeper reason.
\end{itemize}

\subsection*{Acknowledgments}
I am grateful for helpful discussions with Alain Bachelot, Felix Finster, Alessandro Michelangeli, Lukas Nickel, Felix Otto, Sören Petrat, Peter Pickl, Stefan Teufel and Roderich Tumulka. Special thanks go to Detlef Dürr for many insightful discussions and remarks. Financial support by the German National Academic Foundation is gratefully acknowleged.

%\bibliography{literature}

\end{document}